\newtheorem{lemma}{Lemma}
\newtheorem{remark}{Remark}
\newtheorem{definition}{Definition}
\newtheorem{theorem}{Theorem}
\newtheorem{example}{Example}
\newcommand{\lt}{\underset{n \rightarrow \infty}{\lim}}
\title{Learning Immune-Defectives Graph through Group Tests}
\author{
\authorblockN{Abhinav Ganesan, Sidharth Jaggi, and Venkatesh Saligrama, \IEEEmembership{Senior Member, IEEE}}
\thanks{This work was done when A. Ganesan was a Post Doctoral Fellow at the Chinese University of Hong Kong, Hong Kong SAR (e-mail: abhinavg.rsh@gmail.com). S. Jaggi is with the Chinese University of Hong Kong, Hong Kong SAR (e-mail: jaggi@ie.cuhk.edu.hk). V. Saligrama is with Boston University, Boston, USA (e-mail: srv@bu.edu).

A part of the content of this paper has been presented at the 2015 IEEE International Symposium on Information Theory (ISIT).}
}
\begin{document}

\maketitle
\thispagestyle{empty}	

\begin{abstract}
This paper deals with an abstraction of a unified problem of drug discovery and pathogen identification. Pathogen identification involves identification of disease-causing biomolecules. Drug discovery involves finding chemical compounds, called lead compounds, that bind to pathogenic proteins and eventually inhibit the function of the protein. In this paper, the lead compounds are abstracted as inhibitors, pathogenic proteins as defectives, and the mixture of ``ineffective'' chemical compounds and non-pathogenic proteins as normal items. A defective could be immune to the presence of an inhibitor in a test. So, a test containing a defective is positive iff it does not contain its ``associated'' inhibitor. The goal of this paper is to identify the defectives, inhibitors, and their ``associations'' with high probability, or in other words, learn the Immune Defectives Graph (IDG) efficiently through group tests.  We propose a probabilistic non-adaptive pooling design, a probabilistic two-stage adaptive pooling design and decoding algorithms for learning the IDG. For the two-stage adaptive-pooling design, we show that the sample complexity of the number of tests required to guarantee recovery of the inhibitors, defectives, and their associations with high probability, i.e., the upper bound, exceeds the proposed lower bound by a logarithmic multiplicative factor in the number of items. To be precise, lower and upper bounds of $\Omega\left((r+d)\log n + rd\right)$ and $O\left(rd \log n\right)$ tests respectively are identified for classifying $r$ inhibitors and $d$ defectives amongst $n$ items, and their associations. For the non-adaptive pooling design, we show that the upper bound (given by $O((r+d)^2 \log n)$ tests) exceeds the proposed lower bound (given by $\min \left\{\Omega\left((r+d)\log n + rd\right),\Omega\left(\frac{r^2}{\log r}\log n\right),\Omega\left(d^2\right)\right\}$ tests) by at most a logarithmic multiplicative factor in the number of items.
\end{abstract}	

\section{Introduction} \label{sec1}
Preliminary stages of drug discovery involve finding `blocker' or `lead' compounds that bind to a biomolecular target, which is a disease causing pathogenic protein, in order to inhibit the function of the protein. Such compounds are later used to produce new drugs. These lead compounds have to be identified amidst billions of chemical compounds \cite{XTSY2001,Cli2014}, and hence drug discovery is a tedious process. A complementary problem involves identifying pathogenic proteins amidst non-pathogenic ones, both of which are structurally identical in some respects. For instance, out of five known species of ebolavirus, only four of them are pathogenic to humans (see p. $5$ in \cite{Cli2014}) and a similar example can be found in arenavirus \cite{XLL2014}. Some of these pathogenic proteins might share a common inhibitory mechanism against a lead compound which serves to distinguish them from the non-pathogenic ones \cite{XLL2014}. So, finding potential pathogenic proteins amidst a large collection of biomolecules by testing them against known inhibitory compounds is a problem complementary to the problem of lead compound discovery. The lead compounds can be abstracted as inhibitor items, the pathogenic proteins as defective items, and the others as normal items. Now, the above problems can be combined to be viewed as an inhibitor-defective classification problem on the mixture of pathogenic and non-pathogenic proteins, and billions of chemical compounds. This unifies the process of finding both the pathogenic proteins and the lead compounds. An efficient means of solving this problem could  potentially be applied in high-throughput screening for drugs and pathogens or computer-assisted drug and pathogen identification. A natural consideration is that, while some pathogenic proteins might be inhibited by some lead compounds, other pathogenic proteins might be immune to some of these lead compounds present in the mixture of items. In other words, each defective item is possibly immune to the presence of some inhibitor items so that its expression cannot be prevented by the presence of those inhibitors when tested together. By definition, an inhibitor inhibits at least one defective. Learning this inhibitor-defective interaction as well as classifying the inhibitors and defectives efficiently through group testing is presented this work.

A representation of this model, which we refer to as the Immune-Defectives Graph (IDG) model, is given in Fig. \ref{fig-IDG}. The presence of a directed edge between a pair of vertices $\left(w_{i_{k_1}},w_{j_{k_2}}\right)$ represents the inhibition of the defective $w_{j_{k_2}}$ by the inhibitor $w_{i_{k_1}}$ and the absence of a directed edge between a pair of vertices $\left(w_{i_{k_1}},w_{j_{k_2}}\right)$ indicates that the inhibitor $w_{i_{k_1}}$ does not affect the expression of the defective $w_{j_{k_2}}$ when tested together. A formal presentation of the IDG model and the goals of this paper appear in the next section.
\begin{figure}[htbp]  
\centering
\includegraphics[totalheight=2.8in,width=3.8in]{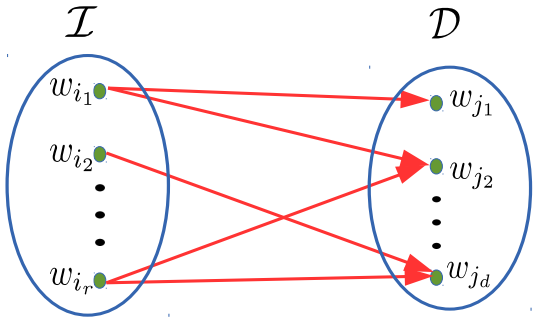}
\vspace{-1cm}
\caption{A representation of the IDG Model, where ${\cal I}$ represents the set of inhibitors and ${\cal D}$ represents the set of defectives.}
\label{fig-IDG}
\end{figure}

\begin{example}\label{eg-bipartite}
An instance of the IDG model is given in Fig. \ref{fig-Eg1}. In this example, the outcome of a test is positive iff a defective $w_{j_{k_2}}$, for some $k_2$, is present in the test and its associated inhibitor $w_{i_{k_2}}$ does not appear in the test. Observe that if the item-pair $\left(w_{i_{k_1}},w_{j_{k_2}}\right)$, for $k_1 \neq k_2$, appears in a test and $w_{i_{k_2}}$ does not appear in the test, then the outcome is positive. Also, if the item-pair $\left(w_{i_{k_2}},w_{j_{k_2}}\right)$ appears in a test and if $w_{j_{k'_2}}$ also appears in the test but not $w_{i_{k'_2}}$, then the test outcome is positive. But if the appearance of every defective $w_{j_{k'_2}}$ in a test is compensated by the appearance of its associated inhibitor $w_{i_{k'_2}}$ in the test, then the test outcome is negative. The outcome of a test is also negative when none of the defectives appear in a test.

\begin{figure}[htbp]  
\centering \hspace{-1cm}
\includegraphics[totalheight=2.8in,width=3.8in]{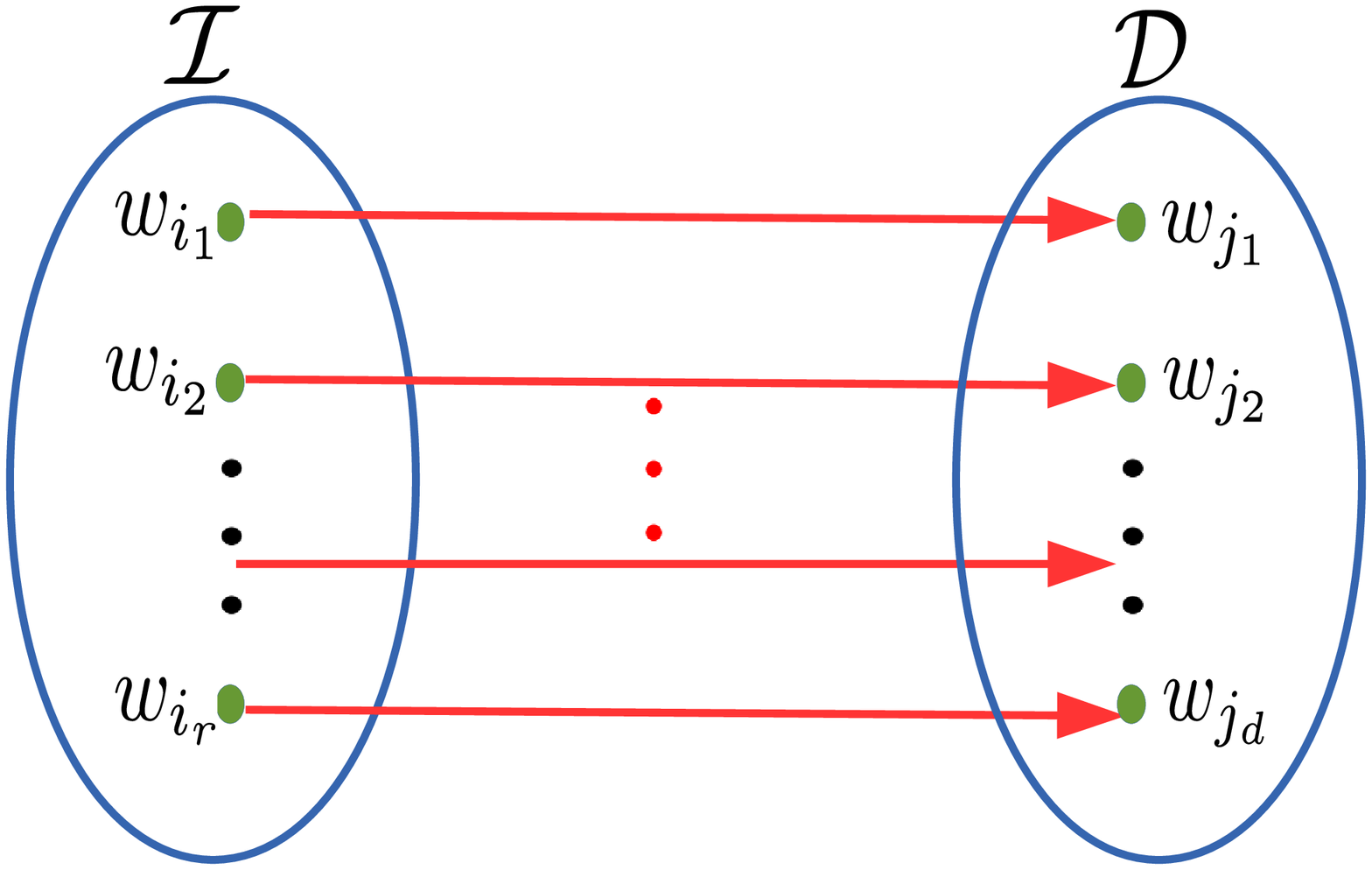}
\vspace{-1cm}
\caption{An example for the IDG Model where each defective is associated with a distinct inhibitor so that $r=d$.}
\label{fig-Eg1}
\end{figure}
\end{example}

The IDG model can also be viewed as a generalization of the $1$-inhibitor model introduced by Farach et al. in \cite{FKKM1997}. This model was motivated by errors in blood testing where blocker compounds (i.e., inhibitors) block the expression of defectives in a test  \cite{PhS1994}. This is also motivated by drug discovery applications where the inhibitors are actually desirable items that inhibit the pathogens \cite{LOGY1997}. In the $1$-inhibitor model, a test outcome is positive iff there is at least one defective and no inhibitors in the test. So, the presence of a single inhibitor is sufficient to ensure that the test outcome is negative.

Efficient testing involves pooling different items together in every test so that the number of tests can be minimized  \cite{Dor1943}. Such a testing methodology is called group testing. The pooling methodology can be of two kinds, namely non-adaptive and adaptive pooling designs. In non-adaptive pooling designs, any pool constructed for testing is independent of the previous test outcomes, while in adaptive pooling designs, some constructed pools might depend on the previous test outcomes. A $k$-stage adaptive pooling design is comprised of pool construction and testing in $k$-stages, where the pools constructed for (non-adaptive) testing in the $k^{\text{th}}$ stage depend on the outcomes in the previous stages. While adaptive group testing requires lesser number of tests than non-adaptive group testing, the latter inherently supports parallel testing of multiple pools. Thus, non-adaptive group testing is more economical (because it allows for automation) as well as saves time (because the pools can be prepared all at once) which are of concern in library screening applications \cite{BBTK-Springer1996}. The $1$-inhibitor model has been extensively studied, and several adaptive and non-adaptive pooling designs for classification of the inhibitors and the defectives are known (refer, \cite{DeV1998,DMTV2001,ADB2008,CCF2010}). A detailed survey of known non-adaptive and adaptive pooling designs for the $1$-inhibitor model is given in \cite{GEJS2014}. The best (in terms of number of tests) known non-adaptive pooling design that guarantees high probability classification of the inhibitors and defectives is proposed in \cite{GEJS2014}. The non-adaptive pooling design proposed in \cite{GEJS2014} requires $O(d \log n)$ tests in the $r=O(d)$ regime and $O\left(\frac{r^2}{d} \log n\right)$ tests in the $d=o(r)$ regime to guarantee classification of both the inhibitors and defectives with high probability\footnote{The number of inhibitors, defectives and normal items are denoted by $r$, $d$, and $n-d-r$ respectively.}. In the small inhibitor, i.e.,  $r=O(d)$ regime, the upper bound on the number of tests matches with the lower bound while in the large inhibitor, i.e., $d=o(r)$ regime, the upper bound exceeds the lower bound of $O\left(\frac{r^2}{d \log \frac{r}{d}} \log n\right)$ by a $\log {\frac{r}{d}}$ multiplicative factor. Nonetheless, the $1$-inhibitor model constrains that every inhibitor must inhibit every defective, which is likely to be a tight requirement in practice. So, the IDG model is a more practical variant of the $1$-inhibitor model.

A formal presentation of the IDG model and the goals of this paper are given in the next section.

{\it Notations:} The Bernoulli distribution with parameter $p$ is denoted by ${\cal B}(p)$, where $p$ denotes the probability of the Bernoulli random variable taking a value of one. The set of binary numbers is denoted by $\mathbb{B}$. Matrices are indicated by boldface uppercase letters and vectors by boldface lowercase letters. The row-$i$, column-$j$ entry of a matrix $\mathbf{M}$ is denoted by $\mathbf{M}(i,j)$, and the coordinate-$i$ of a vector $\mathbf{y}$ is denoted by $\mathbf{y}(i)$. All the logarithms in this paper are taken to the base two. The probability of an event $\cal E$ is denoted by $\Pr \{\cal E\}$. The notation $f(n) \approx g(n)$ represents approximation of a function $f(n)$ by $g(n)$. Mathematically, the approximation denotes that for every $\epsilon>0$, there exists $n_0$ such that for all $n>n_0$, $1-\epsilon<\frac{|f(n)|}{|g(n)|}<1+\epsilon$. 

\section{The IDG Model} \label{sec2}
Consider a set of items ${\cal W}$ indexed as $w_1,\cdots,w_n$ comprised of $r$ inhibitors, $d$ defectives, and $n-r-d$ normal items. It is assumed throughout the paper that $r,d=o(n)$. 

\begin{definition}
An item pair $(w_i,w_j)$, for $i \neq j$, is said to be {\em associated} when the inhibitor $w_i$ inhibits the expression of the defective $w_j$. An item pair $(w_i,w_j)$, for $i \neq j$, is said to be {\em non-associated} if either the inhibitor $w_i$ does not inhibit the expression of the defective $w_j$ or if $w_i$ is not an inhibitor or if $w_j$ is not a defective.
\end{definition}
In general, the mention of an item pair $(w_i,w_j)$ need not mean that $w_i$ is an inhibitor and $w_j$ is a defective. This is understood from the context. 

\begin{definition}
 An {\it association graph} is a left to right directed bipartite graph $\pmb{\cal B}=({\cal I},{\cal D},{\cal E})$, where the set of vertices (on the left hand side) ${\cal I}=\{w_{i_1},w_{i_2}, \cdots, w_{i_r}\} \subset {\cal W}$ denotes the set of inhibitors, the set of vertices (on the right hand side) ${\cal D}=\{w_{j_1},w_{j_2},\cdots,w_{j_d}\} \subset {\cal W}$ denotes the set of defectives, and ${\cal E}$ is a collection of directed edges from ${\cal I}$ to ${\cal D}$. A directed edge $e=(w_{i},w_{k}) \in {\cal E}$, for ${i} \in \{i_1,\cdots,i_r\}$, ${j} \in \{j_1,\cdots,j_d\}$, denotes that the inhibitor $w_{i}$ inhibits the expression of the defective $w_{k}$.
\end{definition}

We refer to ${\cal E}({\cal I},{\cal D})$ {\em conditioned on the sets} $({\cal I},{\cal D})$ to be the {\it association pattern} on $({\cal I},{\cal D})$.

A pooling design is denoted by a test matrix $\mathbf{M}\in \mathbb{B}^{T \times n}$, where the $j^{\text{th}}$ item appears in the $i^{\text{th}}$ test iff $\mathbf{M}(i,j)=1$. {\em A test outcome is positive iff the test contains at least one defective without any of its associated inhibitors}. A positive outcome is denoted by one and a negative outcome by zero.

It is assumed throughout the paper that the defectives are not mutually obscuring, i.e., a defective does not function as an inhibitor for some other defective. In other words, the set of inhibitors ${\cal I}$ and the set of defectives ${\cal D}$ are disjoint.

The goal of this paper is to identify the association graph, or in informal terms, learn the IDG. Thus, the objectives are two-fold as represented by Fig. \ref{fig-obj}.
\begin{enumerate}
 \item Identify all the defectives.
 \item Identify all the inhibitors and also their association pattern with the defectives. 
\end{enumerate}
\begin{figure}[htbp]  \vspace{-0.8cm}
\centering 
\hspace*{-1cm}
\includegraphics[totalheight=3.1in,width=4in]{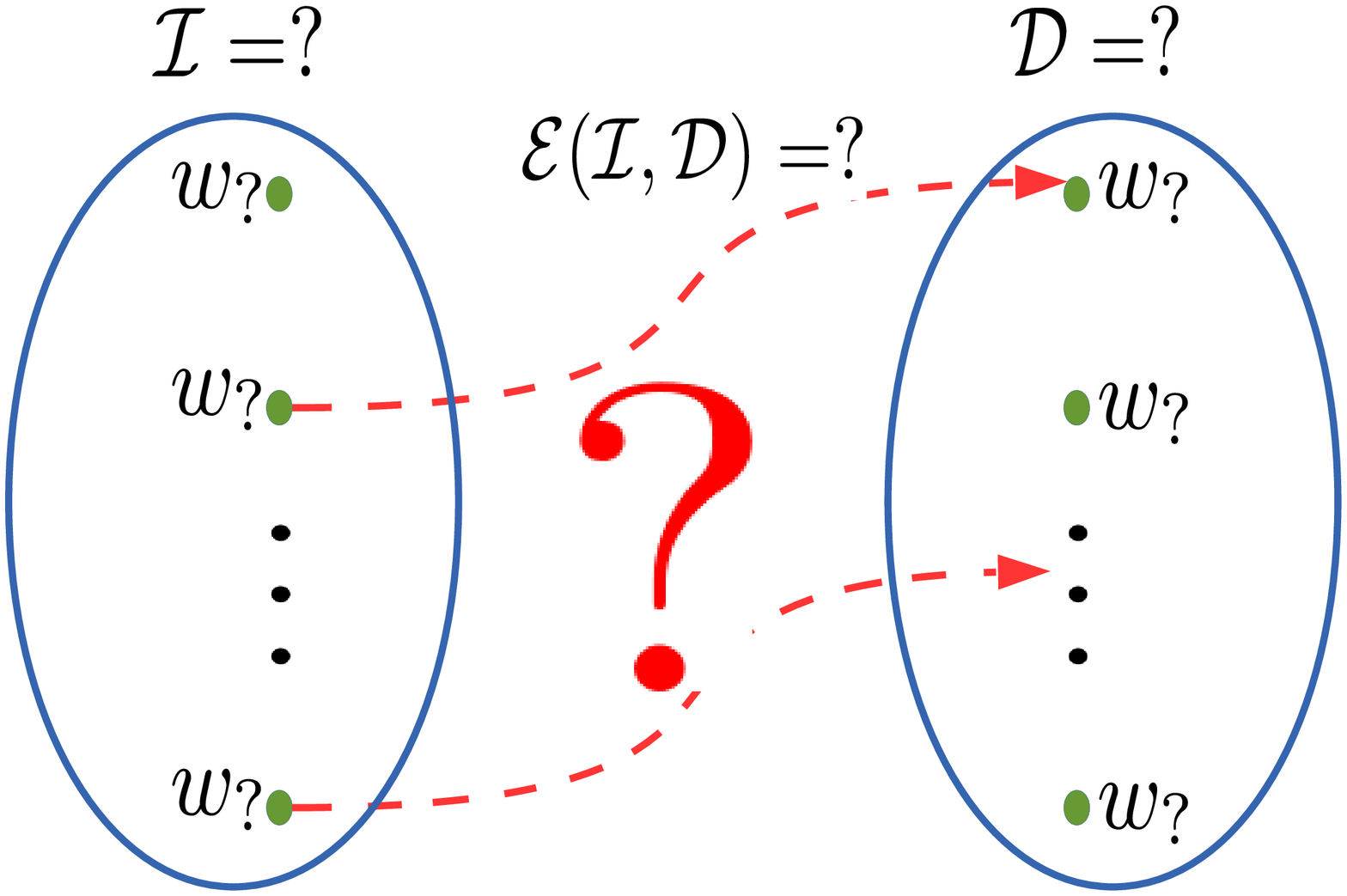}
\vspace{-1.5cm}
\caption{Here, the presence of a directed arrow represents an association between an inhibitor and a defective. The problem statement is to identify the set of inhibitors ${\cal I}$, defectives ${\cal D}$ and the association pattern ${\cal E}(\cal I,\cal D)$.}
\label{fig-obj}
\end{figure}
This problem is further mathematically formulated as follows. Denote the actual set of inhibitors, normal items, and defectives  by ${\cal I}$, ${\cal N}$, and ${\cal D}$ respectively so that ${\cal I}\cup{\cal N}\cup{\cal D}={\cal W}$. The actual association pattern between the actual inhibitor and defective sets is represented by ${\cal E}({\cal I},{\cal D})$. Let  $\hat{\cal I}$, $\hat{\cal N}$, $\hat{\cal D}$, and $\hat{\cal E}(\hat{\cal I},\hat{\cal D})$ denote the declared set of inhibitors, normal items, defectives, and declared association pattern between $(\hat{\cal I},\hat{\cal D})$ respectively. The target is to meet the following error metric.
\begin{align} \label{eqn-error_metric}
 \underset{{\cal I}, {\cal D},{\cal E}\left({\cal I}, {\cal D}\right)}{\max}  \Pr\left\{\left(\hat{\cal I}, \hat{\cal D},\hat{\cal E}\left(\hat{\cal I}, \hat{\cal D}\right)\right)\neq \left({\cal I}, {\cal D},{\cal E}({\cal I},{\cal D})\right)\right\} \leq cn^{-\delta},
\end{align} for some constants $c,\delta>0$.
We propose pooling designs and decoding algorithms, and lower bounds on the number of tests required to satisfy the above error metric. It is assumed that the defective and the inhibitor sets are distributed uniformly across the items, i.e., the probability that any given set of $r+d$ items constitutes all the defectives and inhibitors is given by $\frac{1}{{n \choose d}{n-d \choose r}}$. It is also assumed that the association pattern ${\cal E}({\cal I},{\cal D})$ is uniformly distributed over all possible association patterns on $({\cal I},{\cal D})$. 

We consider two variants of the IDG model. The first being the case where the maximum number of inhibitors that can inhibit any defective, given by $I_{max}$, is known. We refer to this model as the IDG with side information (IDG-WSI) model. For example, Fig. \ref{fig-Eg1} represents a case where $I_{max}=1$. While it is known that $I_{max}=1$, it is unknown which among the items $w_1,\cdots,w_n$ represent which inhibitors and defectives. For a given value of $(r,d)$, not all positive integer values of $I_{max}\leq r$ might be feasible. For instance, if $(r,d)=(3,2)$,  then $I_{max}=1$ is not feasible because, by definition, each inhibitor is associated with at least one defective. So, in the IDG-WSI model, we assume that the given value of $I_{max}$ is feasible for the $(r,d)$ tuple. In particular, if \mbox{$(c-1)d< r \leq cd$} for some integer $c \geq 1$, then $I_{max} \geq c$. This immediately follows from the fact that each inhibitor must be associated with at least one defective.

The other variant of the IDG model we consider in this paper is the case where there is no side information about the inhibitor-defective associations, which means that each defective can be inhibited by as many as $r$ inhibitors. We refer to this model as the IDG-No Side Information (IDG-NSI) model. For both the models, the goals (as stated in the beginning of this section) are the same.

The contributions of this paper for the IDG models are summarized below.

\begin{itemize}
\item The sample complexity of the number of tests sufficient to recover the association graph while satisfying the error metric (\ref{eqn-error_metric}) using the proposed 
\begin{itemize}
 \item non-adaptive pooling design is given by $T_{NA}=O\left((r+d)^2 \log n\right)$ and $T_{NA}=O\left((I_{\max}+d)^2 \log n\right)$ tests for the IDG-NSI and IDG-WSI models respectively (Theorem \ref{thm-NA_UB}, Section \ref{sec3}). 
 \item two-stage adaptive pooling design is given by $T_{A}=O\left(rd \log n\right)$ and $T_{A}=O\left(I_{\max}d \log n\right)$ tests for the IDG-NSI and IDG-WSI models respectively (Theorem \ref{thm-A_UB}, Section \ref{sec3}). 
\end{itemize}
\item In Section \ref{sec4} (Theorem \ref{thm-LB_GTI_ID_NSI} and Theorem \ref{thm-LB_GTI_ID_WSI}), lower bounds of 
\begin{align*}
&\max\left\{\Omega\left((r+d)\log n+rd\right), \Omega\left(\frac{r^2}{\log r} \log n\right),\Omega(d^2)\right\},\\ 
&\max\left\{\Omega\left((r+d)\log n+I_{max}d\right), \Omega\left(\frac{I_{max}^2}{\log I_{max}} \log n\right),\right.\\&\hspace{0.9cm}\left. \Omega(d^2)\right\}
\end{align*}are obtained for non-adaptive pooling designs for the IDG-NSI and IDG-WSI models respectively. The first lower bounds for both the models are valid for adaptive pooling designs also. The third lower bound for the IDG-WSI model is valid under some mild restrictions on $I_{max}$ and $r$, the details of which are given in Theorem \ref{thm-LB_GTI_ID_WSI}. 
\end{itemize}

The pooling design matrix $\mathbf{M}$ constructed in this paper use carefully chosen ``random matrices'', i.e., the entries of the matrices are chosen independently from a suitable Bernoulli distribution. Such matrices are known to permit ease of analysis \cite{book-HD}. Notwithstanding the simplicity of the pooling design construction, figuring out a good decoding algorithm with a reasonable computational complexity and good lower bounds, especially for non-adaptive pooling designs, is a challenging task. The goodness of the pooling design, decoding algorithm tuple and the proposed lower bounds is measured in terms of the closeness of the upper bounds to the lower bounds on the number of tests. For non-adaptive pooling designs, this can be observed from Table \ref{tab-UB_LB}.  For the proposed adaptive pooling design, the upper bound exceeds the lower bound by at most a $\log n$ multiplicative factor for both IDG-NSI and IDG-WSI models. Also, the proposed decoding algorithms have a computational complexity of $O(n T_{NA})$ and $O(n T_{A})$ time units for the non-adaptive and adaptive pooling designs, respectively. This intuitively means that an item is ``processed'' at most a constant number of times per test.

\begin{table*}
\captionsetup{font=small}
\centering
\caption{Necessary and sufficient number of tests for various regimes of the number of inhibitors, defectives, and $I_{max}$ are given. In the large inhibitor regime, i.e., $d=O(r)$ for the IDG-NSI model and $d=O(I_{max})$ for the IDG-WSI model, the upper bounds exceed the lower bounds by multiplicative factors of $\log r$ and $\log I_{max}$ for the IDG-NSI and IDG-WSI models respectively. In the small inhibitor regime, i.e., $r=o(d)$ for the IDG-NSI model and $I_{max}=o(d)$ for the IDG-WSI model, the upper bounds exceed the lower bounds by multiplicative factors of $\log n$ for both IDG-NSI and IDG-WSI models.} \vspace{0.5cm}
 \begin{tabular}{|c|c|c|}
 \hline
Model & $d=O(r), d=O(I_{max})$ (large inhibitor regime) & $r=o(d), I_{max}=o(d)$ (small inhibitor regime)\\
\hline
IDG-WSI & Upper Bound: $O\left(r^2 \log n\right)$ & Upper Bound: $O(d^2 \log n)$\\
        & Lower Bound: $\Omega\left(\frac{r^2}{\log r} \log n\right)$ & Lower Bound: $\Omega(d^2)$\\
\hline
IDG-NSI & Upper Bound: $O\left(I_{max}^2 \log n\right)$ & Upper Bound: $O(d^2 \log n)$\\
        & Lower Bound: $\Omega\left(\frac{I_{max}^2}{\log I_{max}} \log n\right)$ & Lower Bound: $\Omega(d^2)$\\
\hline
 \end{tabular}
 \label{tab-UB_LB}
\end{table*}

{\it Extension of the results on the upper and lower bounds on the number of tests to the case where only upper bounds on the number of inhibitors (given by $R$) and defectives (given by $D$) are known instead of their exact numbers is straightforward. The target error metric in (\ref{eqn-error_metric}) is re-formulated as maximum error probability criterion over all combinations of number of inhibitors and defectives. The results for this case follow by replacing $r$ by $R$ and $d$ by $D$ in the upper and lower bounds on the number of tests.}

There are various generalizations of the $1$-inhibitor model considered in the literature. These models are summarized in the following sub-section to show that the model considered in this paper, to the best of our knowledge, has not been studied in the literature.

\subsection{Prior Works}
The $1$-inhibitor model can be generalized in various directions, mostly influenced by generalizations of the classical group testing model. The various generalizations are listed below and briefly described. Though none of these generalizations include the model studied in this paper, it is worthwhile to understand the differences between these models and the IDG model.

A generalization of the $1$-inhibitor model, namely $k$-inhibitor model was introduced in \cite{DeV2003_k_inh}. In the $k$-inhibitor model, an outcome is positive iff a test contains at least one defective and no more than $k-1$ inhibitors. So, the number of inhibitors must be no less than a certain threshold $k$ to cancel the effect of any defective. This model is different from the model introduced in this paper because, in the IDG model, a single associated inhibitor is enough to cancel the effect of a defective. Further, none of the inhibitors might be able to cancel the effect of a defective because the defective might not be associated with any inhibitor. A model loosely related with the $1$-inhibitor model, namely mutually obscuring defectives model was introduced in \cite{Dam1998_MOD}. Here, it was assumed that multiple defectives could cancel the effect of each other, and hence the outcome of a  test containing multiple defectives could be negative. Thus, a defective can also function as a inhibitor. However, in this paper, the sets of defectives and inhibitors are assumed to be disjoint. The threshold (classical) group testing model is where a test outcome is positive if the test contains at least $u$ defectives, negative if it contains no more than $l$ defectives and arbitrarily positive or negative otherwise \cite{Dam2006_TGT}. This model was combined with the $k$-inhibitor model and non-adaptive pooling designs for the resulting model was proposed in \cite{HTZWG}. 

A non-adaptive pooling design for the general inhibitor model was proposed in \cite{HwC2007_GIM}. Here, the goal was to identify all the defectives with no prior assumption on the cancellation effect of the inhibitors on the defectives, i.e, the underlying unknown inhibitor model could be a $1$-inhibitor, $k$-inhibitor model, or even the ID model introduced in this paper. However, the difference from our work is that, we aim to identify the association graph or, in other words, the cancellation effect of the inhibitors also apart from identification of the defectives. But this cancellation effect does not include the $k$-inhibitor model cancellation effect as noted earlier. Group testing on complex model was introduced in \cite{Tor1999_CM}. In the complex model, a test outcome is positive iff the test contains at least one of the defective sets. So, here the notion of defectives items is generalized to sets of defective items called defective sets. This complex model was combined with the general inhibitor model and non-adaptive pooling designs for identification of defectives was proposed in \cite{CCH2011_ICM}. Our work is different from \cite{CCH2011_ICM} for the same reasons as stated for \cite{HwC2007_GIM}. Group testing on bipartite graphs was proposed in \cite{LTLW2005_BGT} as a special case of the complex model. Here, the left hand side of the bipartite graph represents the bait proteins and the right hand side represents the prey proteins. It is known a priori which items are baits and which ones are preys. The edges in the bipartite graph represent associations between the baits and preys. A test outcome is positive iff the test contains associated items and the goal was to identify these associations. Clearly, this model is different from the IDG because, in the IDG model, there are three types of items involved and the interactions between the three types of items are different from that in \cite{LTLW2005_BGT}.

In the next section, we propose a probabilistic non-adaptive and a probabilistic two-stage adaptive pooling design and decoding algorithms for both the variants of the IDG model discussed  this section.
 
\section{Pooling designs and Decoding Algorithm} \label{sec3}
In this section, we propose a non-adaptive pooling design and decoding algorithm as well as a two-stage adaptive pooling design and decoding algorithm for the IDG-WSI Model. The pooling designs and decoding algorithms for the IDG-NSI model follows from those for the IDG-WSI Model by replacing $I_{max}$ by $r$.

{\it Non-adaptive pooling design:}
The pools are generated from the matrix $\mathbf{M}_{NA}\in \mathbb{B}^{T_{NA} \times n}$. The entries of $\mathbf{M}_{NA}$ are i.i.d. as ${\cal B}(p_1)$. Test the pools denoted by the rows of $\mathbf{M}_{NA}$. Let the outcome vector be given by $\mathbf{y} \in \mathbb{B}^{T_{NA} \times 1}$. The exact value of $T_{NA}$ is specified in (\ref{eqn-beta1}) and (\ref{eqn-beta_NA}) (where $T_{NA}=\beta_{NA} \log n$) in Sub-section \ref{sec3-subsec1}, and its scaling is given in Theorem \ref{thm-NA_UB} (which appears before the beginning of Sub-section \ref{sec3-subsec1}). The exact value of $p_1$ is also given in Theorem \ref{thm-NA_UB}.

{\it Adaptive pooling design:}
A set of pools are generated from the matrix $\mathbf{M}_{1} \in \mathbb{B}^{T_1 \times n}$ whose entries are i.i.d. as ${\cal B}(p_1)$. The pools denoted by the rows of $\mathbf{M}_1$ are tested first and all the defectives are classified from the outcome vector $\mathbf{y}_1 \in \mathbb{B}^{T_1 \times 1}$. Denote the number of items declared defectives by ${\hat{d}}$ and the set of declared defectives by $\left\{\hat{u}_1, \hat{u}_2, \cdots, \hat{u}_{\hat{d}}\right\}$. If $\hat{d} \neq d$, an error is declared. We keep these declared defectives aside and generate another pooling matrix $\mathbf{M}_2 \in \mathbb{B}^{T_2 \times (n-d)}$, whose entries are i.i.d. as ${\cal B}(p_2)$, for the rest of the items. Now, test the pools denoted by the rows of the matrix $\mathbf{M}_2$ along with each of the items declared defectives and the outcomes are denoted by $\mathbf{y}_{\hat{u}_1},\mathbf{y}_{\hat{u}_2},\cdots,\mathbf{y}_{\hat{u}_{{d}}} \in \mathbb{B}^{T_2 \times 1}$. The two stages of testing are done non-adaptively as represented in Fig. \ref{fig-adaptive_PD}, and hence the pooling scheme is a two-stage adaptive pooling design. The exact values of $p_1$ and $p_2$ are given in Theorem \ref{thm-A_UB} (which appears before the beginning of Sub-section \ref{sec3-subsec1}). The scaling of $T_1$ and $T_2$ are also given in Theorem \ref{thm-A_UB} and their exact values are given in (\ref{eqn-beta1}) and (\ref{eqn-beta2}) (where, $T_{i}=\beta_{i} \log n$). The total number of tests is given by $T_1+dT_2$.

\begin{figure}[htbp] 
\centering 
\includegraphics[totalheight=2.8in,width=3.6in]{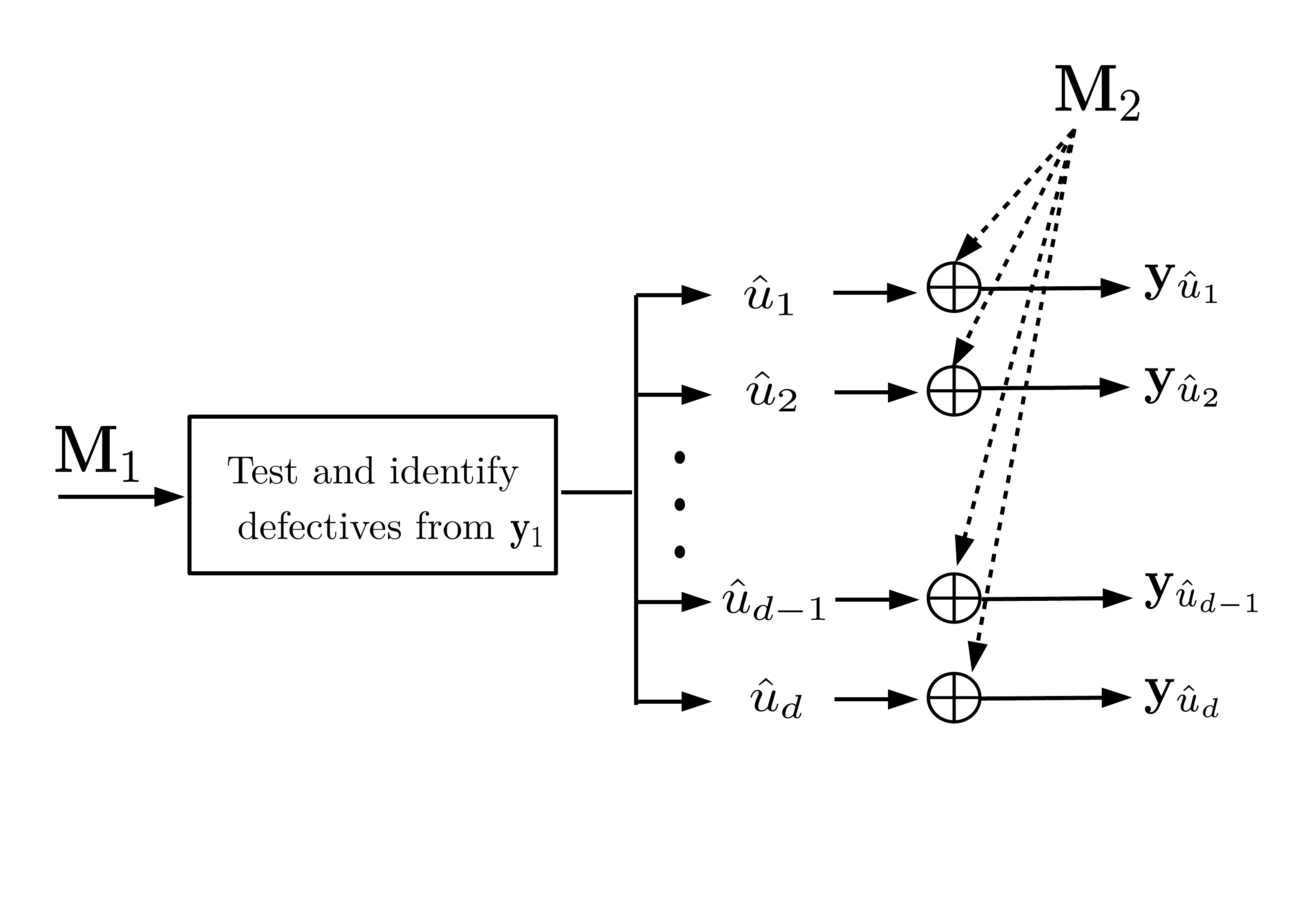}
\vspace{-1.5cm}
\caption{The proposed two-stage adaptive pooling design scheme is demonstrated here. The symbol $\bigoplus$ indicates that the pooling matrix $\mathbf{M}_2$ is tested along with the items $\hat{u}_i$ which are declared defectives. The items  non-associated with $\hat{u}_i$ are determined from the outcome vector $\mathbf{y}_{u_i}$, for $i=1,2,\cdots,d$.}
\label{fig-adaptive_PD}
\end{figure}

The defectives are expected to participate in a higher fraction of positive outcome tests than the normal items or the inhibitors. And, once the defectives are identified, tests of each one of them with rest of the items can be used to determine their associations. We show that this can be done non-adaptively as well. The decoding algorithm proceeds in two steps for both non-adaptive and adaptive pooling design. The first step will identify the defectives from the outcome vectors $\mathbf{y}$ and $\mathbf{y}_1$ in the non-adaptive and adaptive pooling designs respectively, according to the fraction of positive outcome tests in which an item participates. The second step will identify the inhibitors and their associations with the declared defectives using subsets of the outcome vector $\mathbf{y}$ in the non-adaptive pooling design and the outcome vectors $\mathbf{y}_{\hat{u}_1},\mathbf{y}_{\hat{u}_2},\cdots,\mathbf{y}_{\hat{u}_{d}}$ in the adaptive pooling design.

Let us define the following notations\footnote{From hereon, we reserve the notation $u$ to represent a defective, $v$ to represent a normal item and $s$ to represent an inhibitor.} with respect to the pools represented by $\mathbf{M}_{NA}$ and $\mathbf{M}_1$ which are eventually useful in characterizing the statistics of the different types of items that are used in the decoding algorithm.\\
{\it Notations:}
\begin{itemize}
\item ${\cal I}(u)$ denotes the set of inhibitors that the defective $u$ is associated with.
\item ${\mathscr F}_{u_{k}}$ denotes the event that none of the inhibitors associated with a defective $u_k$ appears in a test, given that the defective $u_k$ appears in the test.
\item ${\cal D}^{(j)}_i \subseteq {\cal P}(\{u_1,\cdots,u_d\})$ denotes the $j^{\text{th}}$-set in the (arbitrarily) ordered set of all $i$-tuple subsets of the defective set denoted by ${\cal D}_i$, for $j=1,\cdots,{d \choose i}$, where $u_i$ denotes a defective and ${\cal P}\{(u_1,\cdots,u_d)\}$ denotes the power set of the set of defectives.
\item ${\cal D}(s)$ denotes the defectives associated with the inhibitor $s$ and its complement is given by $\overline{{\cal D}(s)}={\cal D}-{\cal D}(s)$. 
\item  ${\overline{{\cal D}(s)}}_i$ denotes the (arbitrarily) ordered set of all $i$-tuple subsets of the defective set $\overline{{\cal D}(s)}$ and the $j^{\text{th}}$-set in $\overline{{\cal D}(s)}_i$ is denoted by ${\overline{{\cal D}(s)}}^{(j)}_i$.
\end{itemize}

\begin{example}\label{eg-realization-notation}
Realizations of the above notations for the association graph in Fig. \ref{fig-Eg1} considered in Example \ref{eg-bipartite} are given below. The inhibitor set is given by  ${\cal I}=\{s_1,\cdots,s_r\} \subset {\cal W}$ and the defective set is given by ${\cal D}=\{u_1,\cdots,u_d\} \subset {\cal W}$ with $r=d$. An inhibitor $s_i$ is associated with a distinct defective $u_i$, and so 
\begin{itemize}
\item ${\cal I}(u)$ for $u=u_i$ is given by ${\cal I}(u_i)=\{s_i\}$.
\item ${\mathscr F}_{u_{1}}$ represents the event that the inhibitor $s_1$ associated with the defective $u_1$ does not appear in a test, given that the defective $u_1$ appears in the test.
\item Realizations of ${\cal D}_i$ for $i=1,2$ are given by
\begin{align*}
{\cal D}_1=&\left\{\{u_1\},\{u_2\},\cdots,\{u_d\}\right\},\\
{\cal D}_2=&\left\{\{u_1,u_2\},\{u_1,u_3\},\cdots,\{u_1,u_d\},\right.\\ &~\left.\{u_2,u_3\},\cdots,\{u_2,u_d\},\cdots,\{u_{d-1},u_{d}\}\right\}.
\end{align*}Realizations of ${\cal D}^{(j)}_i$ for $(i,j)=(1,2)$ and $(i,j)=(2,3)$ are given by
\[
{\cal D}^{(2)}_1=\{u_2\},{\cal D}^{(3)}_2=\{u_1,u_4\}.
\]
\item ${\cal D}(s)$ for $s=s_1$ is given by ${\cal D}(s_1)=u_1$ and its complement is given by $\overline{{\cal D}(s_1)}=\{u_2,\cdots,u_d\}$.
\item Realizations of ${\overline{{\cal D}(s)}}_i$ for $s=s_1$ and $i=1,2$ 
\begin{align*}
{\overline{{\cal D}(s_1)}}_1=&\left\{\{u_2\},\cdots,\{u_d\}\right\},\\
{\overline{{\cal D}(s_1)}}_2=&\left\{\{u_2,u_3\},\{u_2,u_4\},\cdots,\{u_2,u_d\},\right.\\ &~\left.\{u_3,u_4\},\cdots,\{u_3,u_d\},\cdots,\{u_{d-1},u_{d}\}\right\}.
\end{align*}Realizations of ${\overline{{\cal D}(s)}}^{(j)}_i$ with $s=s_1$, for $(i,j)=(1,2)$ and $(i,j)=(2,3)$ are given by
\[
\overline{{\cal D}(s)}^{(2)}_1=\{u_3\},\overline{{\cal D}(s)}^{(3)}_2=\{u_2,u_5\}.
\]
\end{itemize}
\end{example}

We now define the following statistics corresponding to the different types of items. The following statistics also hold good when $\mathbf{y}_1$ is replaced by $\mathbf{y}$, as entries of both $\mathbf{M}_{NA}$ and $\mathbf{M}_1$ have the same statistics.
\begin{align}
\nonumber
&q^{(u)}_1 \triangleq \Pr \left\{\mathbf{y}_{1}(l)=1|\text{defective $u$ is present in the $l^{\text{th}}$-test}\right\}\\
\label{eqn-LB_q1}
&\geq(1-p_1)^{|{\cal I}(u)|} \geq (1-p_1)^{I_{max}}, \\
\nonumber
&q^{(v)}_2\triangleq \Pr \left\{\mathbf{y}_{1}(l)=1|\text{normal item $v$ is present in the $l^{\text{th}}$-test}\right\}\\
\label{eqn-q2}
&= \sum_{i=1}^{d} p^{i}_1(1-p_1)^{d-i}\sum_{j=1}^{d \choose i}\text{ Pr}\left\{ \underset{u_k \in {\cal D}^{(j)}_i}{\bigcup}{\mathscr F}_{u_{k}}\right\} \triangleq q_2\\
\label{eqn-q2_UB}
&\leq \sum_{i=1}^{d} p^{i}_1(1-p_1)^{d-i} {d \choose i}= 1-(1-p_1)^d \triangleq q^{UB}_{2},\\
\nonumber
&q^{(s)}_3 \triangleq \Pr \left\{\mathbf{y}_{1}(l)=1|\text{Inhibitor $s$ is present in the $l^{\text{th}}$-test}\right\}\\
\label{eqn-q3}
&= \sum_{i=1}^{\left|\overline{{\cal D}(s)}\right|} p^{i}_1(1-p_1)^{\left|\overline{{\cal D}(s)}\right|-i}\sum_{j=1}^{\left|\overline{{\cal D} (s)}_i\right|}\text{ Pr}\left\{  \underset{u_k \in {\overline{{\cal D}(s)}}^{(j)}_i}{\bigcup}{\mathscr F}_{u_{k}}\right\},\\\nonumber & \hspace{0.5cm} \text{ if $\left|\overline{{\cal D}(s)}\right| \geq 1$},\\
\nonumber
& =0, \text{ otherwise.}
\end{align} Since the outer and inner summations in (\ref{eqn-q3}) is over a subset of those in (\ref{eqn-q2}), $\underset{s}{\max} ~q^{(s)}_3 \leq q^{(v)}_2=q_2$. It is also intuitive that positive outcome for an inhibitor in a test is less probable than that for a normal item. The equality in (\ref{eqn-q2}) follows from the fact that a test outcome is positive iff at least one defective appears in the test (which is captured by the outer summation term) and none of the inhibitors associated with at least one of these defectives appears in the test (which is captured by the union of the events ${\mathscr F}_{u_k}$ over $u_k$). A similar explanation holds true for (\ref{eqn-q3}). The upper bound in (\ref{eqn-q2_UB}) follows from the upper bound of one on the probability terms of (\ref{eqn-q2}). In hindsight, the lower bound in (\ref{eqn-LB_q1}) and the upper bound in (\ref{eqn-q2_UB})  can be easily obtained as follows. The lower bound on the positive outcome statistics for a defective item in (\ref{eqn-LB_q1}) follows from the worst case statistics when all the inhibitors inhibit the expression of every defective. The upper bound on the statistics for a normal item in (\ref{eqn-q2_UB}) follows by using the best case positive outcome statistics, in the absence of inhibitors, where the appearance of any defective gives a positive test outcome. In the sequel, we shall exploit the difference between (\ref{eqn-LB_q1}) and (\ref{eqn-q2_UB}) to identify the defectives notwithstanding the fact the one of them could be loose bounds for specific association graphs. For example, (\ref{eqn-LB_q1}) is tight for the $1$-inhibitor model whereas (\ref{eqn-q2_UB}) could be a loose upper bound for the same association graph, depending on the values of $p$, $r$, and $d$. However, fortunately, $p_1$ can be chosen appropriately so that the looseness in the bounds do not affect the scaling of the upper bound on the number of tests required to identify the defectives, and the dominant scaling is determined by the number of tests required to identify the association pattern.

Denote the worst case negative outcome statistic for a defective by
\begin{align}\label{eqn-bmax}
 b_{max}=1-(1-p_1)^{I_{max}}.
\end{align}

Denote the set of tests corresponding to outcome vector $\mathbf{y}$ in which an item $w_j$ participates by ${\cal T}_{w_j}(\mathbf{y})$ and the set of positive outcome tests in which the item $w_j$ participates by ${\cal S}_{w_j}(\mathbf{y})$, for $j=1,2,\cdots,n$. The decoding algorithm is given as follows.

\begin{enumerate}
 \item {\em Step $1$ (Identifying the defectives for \underline{both non-adaptive} \underline{and adaptive pooling designs}):} \\For the non-adaptive pooling design, if $|{\cal S}_{w_j}(\mathbf{y})| > \left|{\cal T}_{w_j}(\mathbf{y})\right|[1-b_{max}(1+\tau))]$ with $b_{max}$ as defined in (\ref{eqn-bmax}), declare the item $w_j$ to be a defective. For the adaptive pooling design, we use the same criterion, replacing $\mathbf{y}$ by $\mathbf{y}_1$. Denote the number of items declared as defectives by ${\hat{d}}$ and the set of declared defectives by $\left\{\hat{u}_1, \hat{u}_2, \cdots, \hat{u}_{\hat{d}}\right\}$. If $\hat{d}\neq d$, declare an error. Denote the the remaining unclassified items in the population by $\left\{w'_1,\cdots,w'_{n-{d}}\right\} \triangleq \left\{w_1,\cdots,w_{n}\right\}-\left\{\hat{u}_1,\cdots, \hat{u}_{d}\right\}$.
 \item {\em Step $2$ (Identifying the inhibitors and their associations for \underline{non-adaptive pooling design}):} \\
 Let ${\cal P}_k$ denote the sets of pools in $\mathbf{M}_{NA}$ that contain only the declared defective $\hat{u}_k$ and none of the other declared defectives, for $k=1,\cdots,{d}$. Also, let the outcomes corresponding to these pools be positive. This  means that the pools in ${\cal P}_k$ do not contain any inhibitor from the set ${\cal I}(\hat{u}_k)$, which denotes the set of inhibitors associated with the item $\hat{u}_k$ if $\hat{u}_k$ is indeed a defective. Now, consider only the outcomes corresponding to these pools denoted by $\mathbf{y}_{{\cal P}_1} \subset \mathbf{y}, \cdots, \mathbf{y}_{{\cal P}_{d}} \subset \mathbf{y}$. The associations of the declared defectives are identified as follows. 
\begin{itemize}
 \item For each $k=1 \text{ to } {d}$, declare $(w'_j, \hat{u}_k)$ to be a non-associated inhibitor-defective pair if $w'_j$ participates in at least one of the tests corresponding to the outcome vector $\mathbf{y}_{{\cal P}_k}$ and declare the rest of the items to be associated with $\hat{u}_k$.
\end{itemize}The items declared as non-associated for all $k$ are declared to be be normal items. If ${\cal P}_k=\{\emptyset\}$ for some $k$, declare an error.
\item {\em Step $2$ (Identifying the inhibitors and their associations for \underline{adaptive pooling design}):} \\
Let ${\cal S}(\mathbf{y}_{\hat{u}_k})$ denote the set of positive outcome tests corresponding to $\mathbf{y}_{\hat{u}_k}$, i.e., these pools do not contain any inhibitor from the set ${\cal I}(\hat{u}_k)$ if $\hat{u}_k$ is a defective. 
\begin{itemize}
 \item For each $k=1 \text{ to } {d}$, declare $(w'_j, \hat{u}_k)$ to be a non-associated inhibitor-defective pair if $w'_j$ participates in at least one of the  tests in the set ${\cal S}(\mathbf{y}_{\hat{u}_k})$ and declare the rest of the items to be associated with $\hat{u}_k$.
\end{itemize}
The items declared as non-associated for all $k$ are declared to be be normal items. If ${\cal S}(\mathbf{y}_{\hat{u}_k})=\{\emptyset\}$ for some $k$, declare an error.
\end{enumerate}

The following toy example demonstrates the operation of the above decoding algorithm for non-adaptive pooling design.
\begin{example}\label{eg-NA_Algo}
Consider the following non-adaptive pooling design matrix $\mathbf{M}_{NA} \in \mathbb{B}^{5 \times 5}$ and the outcome vector $\mathbf{y}\in \mathbb{B}^{5 \times 1}$ for the underlying association graph shown in Fig. \ref{fig-NA_Algo}. The item $w_5$ is a normal item. Here, $r=d=2, n=5, T_{NA}=5$. 
 \begin{figure}[htbp]  
\centering
\includegraphics[totalheight=2in,width=3in]{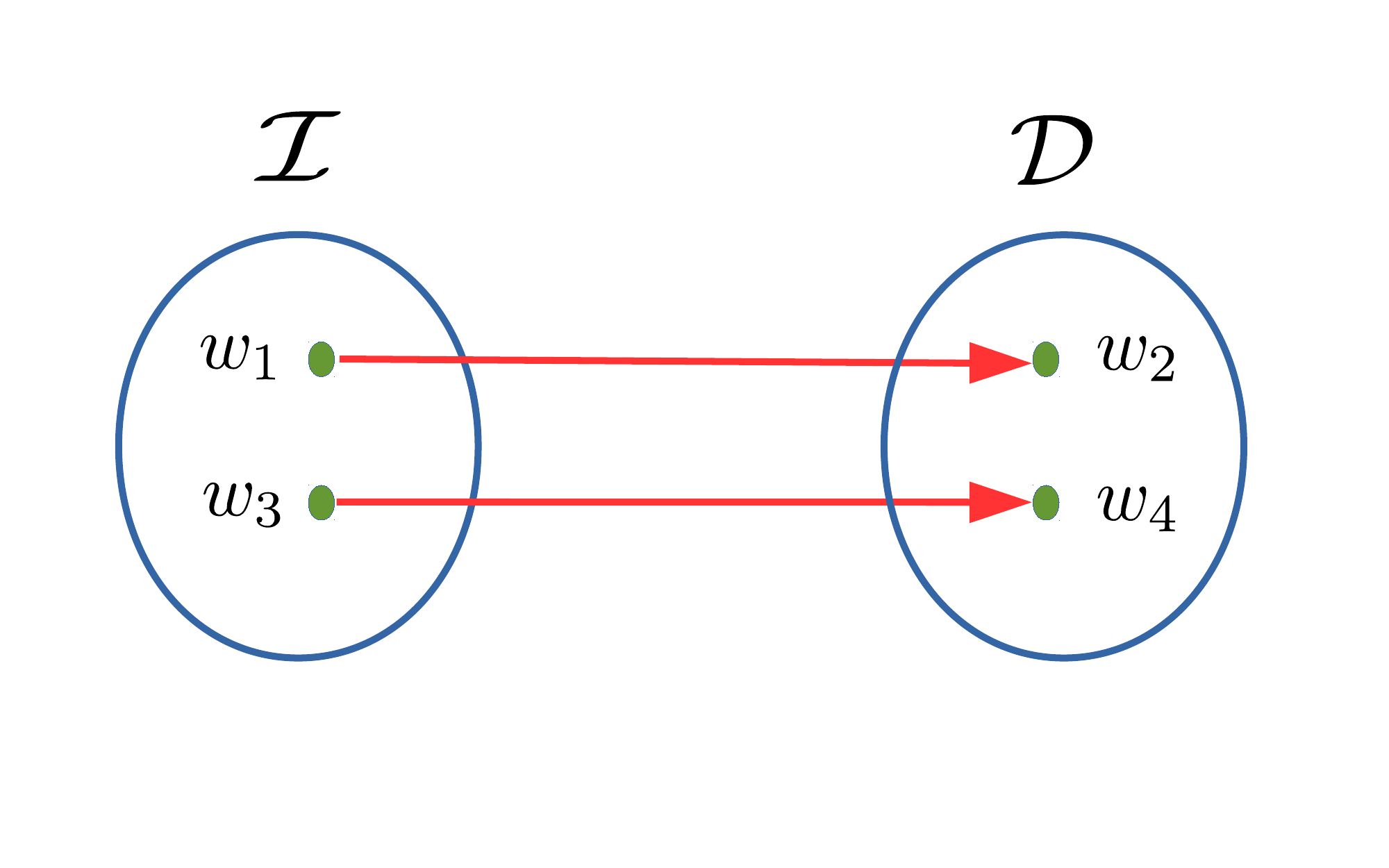}
\vspace{-1cm}
\caption{The underlying association graph for Example \ref{eg-NA_Algo}.}
\label{fig-NA_Algo}
\end{figure}
\begin{align*}
\mathbf{M}_{NA} = \begin{bmatrix}
                   1 & 1 & 0 & 0 & 1\\
                   0 & 1 & 0 & 1 & 0\\
                   0 & 1 & 1 & 0 & 1\\
                   1 & 0 & 0 & 1 & 1\\
                   0 & 0 & 1 & 1 & 1
                  \end{bmatrix}\Rightarrow \mathbf{y}=\begin{bmatrix}
					    0 \\ 1 \\ 1 \\ 1  \\0  
					   \end{bmatrix}
\end{align*}We recall that column-$j$ of the matrix $M_{NA}$ corresponds to the item $w_j$. The threshold for identifying the defectives in Step $1$ of the decoding algorithm is such that any item $w_j$ that satisfies the condition $\frac{|{\cal S}_{w_j}(\mathbf{y})|}{|{\cal T}_{w_j}(\mathbf{y})|}>\frac{1}{2}$ is declared to be a defective. Now, observe the operation of the decoding algorithm.

{\it Step $1$:} We observe that
\begin{align*}
&\frac{|{\cal S}_{w_1}(\mathbf{y})|}{|{\cal T}_{w_1}(\mathbf{y})|}=\frac{1}{2},\frac{|{\cal S}_{w_2}(\mathbf{y})|}{|{\cal T}_{w_2}(\mathbf{y})|}=\frac{2}{3},\frac{|{\cal S}_{w_3}(\mathbf{y})|}{|{\cal T}_{w_3}(\mathbf{y})|}=\frac{1}{2}, \\ & \frac{|{\cal S}_{w_4}(\mathbf{y})|}{|{\cal T}_{w_4}(\mathbf{y})|}=\frac{2}{3},\frac{|{\cal S}_{w_5}(\mathbf{y})|}{|{\cal T}_{w_5}(\mathbf{y})|}=\frac{1}{2}.
\end{align*}Items $w_2$ and $w_4$ are the only items that satisfy the condition $\frac{|{\cal S}_{w_j}(\mathbf{y})|}{|{\cal T}_{w_j}(\mathbf{y})|}>\frac{1}{2}$, and hence are declared defectives. Therefore, the declared defectives are given by $\hat{u}_1=w_2$, $\hat{u}_2=w_4$ and the remaining unclassified items are given by $w'_1=w_1, w'_2=w_3, w'_3=w_5$.

{\it Step $2$:} The ``useful'' pools used for identifying the ``non-associations'' are obtained as ${\cal P}_1=\{3\},{\cal P}_2=\{4\}$. This is because the third test outcome in which $\hat{u}_1$ participates and $\hat{u}_2$ does not participate is positive, and the fourth test outcome in which $\hat{u}_2$ participates and $\hat{u}_1$ does not participate is also positive. Since the items $w'_2$ and $w'_3$ participate in the third test, $(w'_2,\hat{u}_1)=(w_3,w_2)$ and $(w'_3,\hat{u}_1)=(w_5,w_2)$ are declared to be non-associated inhibitor-defective pairs and $(w'_1,\hat{u}_1)=(w_1,w_2)$ is declared to be an associated inhibitor-defective pair. Similarly, $(w'_1,\hat{u}_2)=(w_1,w_4)$ and $(w'_3,\hat{u}_1)=(w_5,w_4)$ are declared to be a non-associated item-pairs  and $(w'_2,\hat{u}_2)=(w_3,w_4)$ is declared to be an associated inhibitor-defective pair. Since the item $w'_3=w_5$ is declared to be non-associated with both $\hat{u}_1$ and $\hat{u}_2$, it is declared to be a normal item.

We emphasize that this is a toy example to demonstrate the operation of the proposed decoding algorithm and not representative of the values of $p$ or $\tau$ or $T_{NA}$ for the given values of $r,d,n$. 
\end{example}

\begin{remark}
 {\it (Step $1$)} The first step in the decoding algorithm, which is the same for both the non-adaptive and adaptive pooling design, is similar to the defective classification algorithm used in \cite{GEJS2014} for the $1$-inhibitor model. The underlying common principle used is that there exists statistical difference between the defective items and the rest of the items. Hence, with sufficient number of tests, the defectives can be classified by ``matching'' the tests in which an item participates and the positive outcome tests. The items involved in a large fraction of positive outcome tests are declared to be defectives. A similar decoding algorithm was used in the classical group testing framework with noisy tests \cite{CJSA_TIT2014}. Here, the inhibitors of a defective item, if any, behave like a noise due to probabilistic presence in a test. The (worst case) expected number of positive outcome tests in which a defective participates is at least $|{\cal T}_{w_j}(\mathbf{y})|[1-b_{max}]$. Like in \cite{GEJS2014}, the Chernoff-Hoeffding concentration inequality \cite{Hof1963} is used to bound the error probability and obtained the exact number of tests required to achieve a target (vanishing) error probability. {\it It is important to note that, a priori, it is not clear if a fixed threshold technique can sieve the defectives under worst case positive outcome statistics and the rest of the items under best case positive outcome statistics, with vanishing error probability}. The fact that this is indeed possible will be proved in the following sub-section.
\end{remark}

\begin{remark}
{\it (Step $2$)} In the IDG model, the inhibitors for each defective might be distinct. Hence, an inhibitor for one defective behaves as a normal item from the perspective of another defective. This defective-specific interaction is absent in the $1$-inhibitor model. So, any inhibitor can be identified using any defective, i.e, an inhibitor's behaviour is defective-invariant in the $1$-inhibitor model, which was exploited in identifying the inhibitors in \cite{GEJS2014}. Since each inhibitor's behaviour can be defective-specific in the IDG model, we need to identify the defectives first and then identify its associated inhibitors by observing the interaction of the other items with each of these defectives.
\end{remark}

The following theorems state the values of the parameters $p_1$, $p_2$, and $\tau$, and the scaling of the number of tests required for the proposed non-adaptive and adaptive pooling designs to determine the association graph with high probability. Similar results can be stated for the IDG-NSI model by replacing $I_{max}$ by $r$ in the following theorems.

\begin{theorem}[Non-adaptive pooling design]\label{thm-NA_UB}
 Choose the pooling design matrix $\mathbf{M}_{NA}$ of size $T_{NA} \times n$ with its entries chosen i.i.d. as ${\cal B}(p_1)$ with $p_1=\frac{1}{3(I_{max}+d)}$ for the IDG-WSI model. Test the pools denoted by the rows of the  matrix $\mathbf{M}_{NA}$ non-adaptively. The scaling of the number of tests sufficient to guarantee vanishing error probability (\ref{eqn-error_metric}) using the proposed decoding algorithm with  $\tau=\frac{1-b_{max}-q^{UB}_2}{2b_{max}}$ is given by $T_{NA}=O\left((I_{max}+d)^2\log n\right)$, where $q^{UB}_2$ and $b_{max}$  are defined in (\ref{eqn-q2_UB}) and (\ref{eqn-bmax}) respectively. 
\end{theorem}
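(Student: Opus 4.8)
The plan is to bound the overall failure probability by $\Pr\{\text{Step 1 fails}\} + \Pr\{\text{Step 2 fails} \mid \text{Step 1 correct}\}$ and to drive each term below $\tfrac{c}{2}n^{-\delta}$ by taking the constant $\beta_{NA}$ in $T_{NA}=\beta_{NA}(I_{max}+d)^2\log n$ large enough. The structural fact I would establish first is that the decision statistic of a defective and that of any non-defective are separated by a constant gap. By (\ref{eqn-LB_q1}) a defective has positive-test probability $q^{(u)}_1\geq 1-b_{max}=(1-p_1)^{I_{max}}$, while by (\ref{eqn-q2_UB}) together with $\max_s q^{(s)}_3\leq q_2\leq q^{UB}_2$ every normal item and inhibitor has positive-test probability at most $q^{UB}_2=1-(1-p_1)^{d}$. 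Using Bernoulli's inequality $(1-p_1)^m\geq 1-mp_1$ with $p_1=\frac{1}{3(I_{max}+d)}$ gives
\[
(1-p_1)^{I_{max}}+(1-p_1)^{d}\geq 2-(I_{max}+d)p_1=\tfrac{5}{3},
\]
so the gap $(1-b_{max})-q^{UB}_2=(1-p_1)^{I_{max}}+(1-p_1)^{d}-1\geq \tfrac23$ is a positive constant independent of $n,r,d,I_{max}$. Since the prescribed threshold satisfies $1-b_{max}(1+\tau)=\tfrac12\big((1-b_{max})+q^{UB}_2\big)$, it sits exactly at the midpoint of this gap, so both a defective and a non-defective are separated from the threshold by a margin of at least $\tfrac13$, and $\tau$ is itself a positive constant.

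For Step 1, I would condition on the participation count $|\mathcal{T}_{w_j}|=t$. Because the rows of $\mathbf{M}_{NA}$ are independent, the outcomes of the $t$ tests containing $w_j$ are i.i.d.\ Bernoulli with mean equal to the appropriate statistic, so $|\mathcal{S}_{w_j}|$ is binomial. The Chernoff--Hoeffding inequality \cite{Hof1963} then bounds the probability that the empirical positive fraction crosses the threshold by the constant margin $\geq\tfrac13$ by $\exp(-\Omega(t))$. Since $|\mathcal{T}_{w_j}|\sim\mathrm{Bin}(T_{NA},p_1)$ concentrates around $T_{NA}p_1=\Theta((I_{max}+d)\log n)$, I would first discard the rare event $|\mathcal{T}_{w_j}|<\tfrac12 T_{NA}p_1$ (probability $\exp(-\Omega((I_{max}+d)\log n))$ by a Chernoff bound) and, on its complement, obtain a per-item error of $\exp(-\Omega((I_{max}+d)\log n))=n^{-\Omega(I_{max}+d)}$. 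A union bound over the $n$ items yields $\Pr\{\text{Step 1 fails}\}\leq n\cdot n^{-\Omega(I_{max}+d)}$, which is below $\tfrac{c}{2}n^{-\delta}$ for large $\beta_{NA}$.

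For Step 2, conditioning on $\hat{\mathcal{D}}=\mathcal{D}$, I would first note that the error is \emph{one-sided}: if $w'_j$ is genuinely an inhibitor associated with $\hat u_k$, then any pool containing $\hat u_k$, no other defective, and $w'_j$ is necessarily negative, so $w'_j$ never lies in a positive pool of $\mathcal{P}_k$ and is never wrongly declared non-associated. The only possible error is that a truly non-associated $w'_j$ misses every pool of $\mathcal{P}_k$. A single test is a useful pool of $\mathcal{P}_k$ containing $w'_j$ when it contains $\hat u_k$ and $w'_j$, excludes the other $d-1$ defectives, and excludes $\mathcal{I}(\hat u_k)$; since $w'_j\notin\mathcal{I}(\hat u_k)$ these events are independent with probability at least $p_1^2(1-p_1)^{d-1+I_{max}}\geq \tfrac23 p_1^2=\Theta\big((I_{max}+d)^{-2}\big)$, again by Bernoulli's inequality. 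Hence $w'_j$ misses all useful pools with probability at most $\exp(-\Omega(T_{NA}(I_{max}+d)^{-2}))=n^{-\Omega(\beta_{NA})}$, and a union bound over the at most $nd$ pairs $(w'_j,\hat u_k)$ (which also covers the events $\mathcal{P}_k=\emptyset$, since each $\hat u_k$ has at least one non-associated item) gives $\Pr\{\text{Step 2 fails}\mid\text{Step 1 correct}\}\leq n^2\cdot n^{-\Omega(\beta_{NA})}$. It is this step, not Step 1, that forces $T_{NA}=\Theta((I_{max}+d)^2\log n)$, consistent with the remark that the association pattern dominates the test count.

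I expect the main obstacle to be exactly the simultaneous-threshold issue flagged in the Step 1 remark: a single fixed threshold must correctly sieve defectives evaluated under their \emph{worst-case} (smallest) statistic $1-b_{max}$ and non-defectives under their \emph{best-case} (largest) statistic $q^{UB}_2$, even though for a given association graph one of these bounds may be very loose. The whole argument hinges on the engineered choice $p_1=\frac{1}{3(I_{max}+d)}$ making $(1-p_1)^{I_{max}}+(1-p_1)^{d}$ exceed $1$ by a constant, so that the gap (and hence a positive constant $\tau$) survives uniformly over all feasible $(r,d,I_{max})$; making this bound and the positivity of the induced margin rigorous is the linchpin. The remainder is bookkeeping: treating the random participation counts $|\mathcal{T}_{w_j}|$ by the preliminary concentration step, and confirming that the per-item and per-pair tail bounds, after union bounds over $n$ and $nd$ terms, are absorbed by taking $\beta_{NA}$ (the hidden constant in $T_{NA}$) large relative to $\delta$.
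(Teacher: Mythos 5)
Your proposal is correct and follows essentially the same route as the paper: the same two-step error decomposition, the same key bound $(1-p_1)^{I_{max}}+(1-p_1)^{d}-1\geq \tfrac{2}{3}$ giving a constant margin around the midpoint threshold, Chernoff--Hoeffding for Step 1, and the union bound over non-associated pairs with success probability $p_1^2(1-p_1)^{|{\cal I}(u_k)|+d-1}$ for Step 2. The only (immaterial) difference is bookkeeping in Step 1, where you condition on $|{\cal T}_{w_j}|$ concentrating above half its mean while the paper sums the Chernoff--Hoeffding tail exactly against the binomial distribution of the participation count.
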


\begin{theorem}[Adaptive pooling design]\label{thm-A_UB}
 Choose the pooling design matrices $\mathbf{M}_1$ and $\mathbf{M}_2$ of sizes $T_1 \times n$ and $T_2 \times n$ with its entries chosen i.i.d. as  i.i.d. ${\cal B}(p_1)$ and ${\cal B}(p_2)$ respectively, with $p_1=\frac{1}{3(I_{max}+d)}$ and $p_2=\frac{1}{2I_{max}}$ for the IDG-WSI model. Test the pools denoted by the rows of the matrices $\mathbf{M}_1$ non-adaptively and classify the defectives. Now, test each of the pools from $\mathbf{M}_2$ along with the $d$ classified defectives individually. The scaling of the number of tests sufficient to guarantee vanishing error probability (\ref{eqn-error_metric}) using the proposed decoding algorithm with  $\tau=\frac{1-b_{max}-q^{UB}_2}{2b_{max}}$ is given by $T_A=T_1+dT_2=O\left(I_{max}d\log n\right)$, where $q^{UB}_2$ and $b_{max}$  are defined in (\ref{eqn-q2_UB}) and (\ref{eqn-bmax}) respectively. 
\end{theorem}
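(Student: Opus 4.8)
The plan is to bound separately the failure probabilities of the two stages of the decoding algorithm and then choose $T_1$ and $T_2$ so that, after a union bound over all items (Step~1) and over all item--defective pairs (Step~2), the total error probability meets the metric (\ref{eqn-error_metric}). Throughout I would condition on the event that Step~1 correctly outputs $\{\hat u_1,\dots,\hat u_d\}={\cal D}$; on this event the second-stage pools of $\mathbf{M}_2$ augmented by a single declared defective contain exactly one defective, which makes the Step~2 analysis transparent.

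\emph{Step~1 (defective identification).} First I would record that the choice $p_1=\tfrac{1}{3(I_{max}+d)}$ forces a constant separation between the worst-case defective statistic and the best-case non-defective statistic. Writing $\Delta\triangleq 1-b_{max}-q^{UB}_2=(1-p_1)^{I_{max}}+(1-p_1)^{d}-1$ and applying Bernoulli's inequality twice gives $\Delta\ge 1-(I_{max}+d)p_1=\tfrac{2}{3}$, so $\Delta=\Theta(1)$; likewise $b_{max}\le I_{max}p_1\le\tfrac13$. With $\tau=\tfrac{\Delta}{2b_{max}}$ the decision threshold equals $1-b_{max}(1+\tau)=1-b_{max}-\tfrac{\Delta}{2}$, which lies a distance $\tfrac{\Delta}{2}=\Theta(1)$ below the defective mean $1-b_{max}$ (cf.\ (\ref{eqn-LB_q1})) and a distance $\tfrac{\Delta}{2}$ above the normal-item mean $q^{UB}_2$ (cf.\ (\ref{eqn-q2_UB})); since $\max_s q^{(s)}_3\le q_2$, inhibitors are covered by the same upper margin. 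I would then bound each item's misclassification probability in two steps: a Chernoff bound showing that $|{\cal T}_{w_j}(\mathbf{y}_1)|$ concentrates around its mean $T_1p_1$, followed by the Chernoff--Hoeffding inequality \cite{Hof1963} applied to the positive-outcome fraction \emph{conditioned} on the participation count $|{\cal T}_{w_j}(\mathbf{y}_1)|=t$. Conditioned on participation the per-test outcomes are i.i.d.\ Bernoulli with the relevant mean, so a deviation of $\tfrac{\Delta}{2}$ has probability at most $\exp(-t\Delta^{2}/2)=\exp(-\Theta(T_1p_1))$. Requiring $T_1p_1=\Omega(\log n)$ makes this $n^{-\Theta(1)}$, and a union bound over the $n$ items yields $\hat d=d$ with probability $1-O(n^{-\delta})$. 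This needs only $T_1=O\big((I_{max}+d)\log n\big)$ tests; conditioning on $t$ (rather than applying a single Hoeffding bound over all $T_1$ tests, which would absorb an extra factor of $p_1$) is exactly what keeps the dependence linear in $(I_{max}+d)$.

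\emph{Step~2 (associations).} On the Step~1 success event, each augmented pool counted in ${\cal S}(\mathbf{y}_{\hat u_k})$ is a test whose only defective is $u_k$, so the outcome is positive iff none of the inhibitors in ${\cal I}(u_k)$ lies in that $\mathbf{M}_2$-pool. Consequently a genuinely associated item (an inhibitor of $u_k$) can \emph{never} appear in a positive augmented test, so the algorithm produces no false ``non-association''; the only failure mode is a truly non-associated item $w'_j$ that appears in no positive test and is therefore wrongly declared associated. For such a $w'_j$ a single augmented test is positive and contains $w'_j$ with probability $p_2(1-p_2)^{|{\cal I}(u_k)|}\ge p_2(1-p_2)^{I_{max}}$, which with $p_2=\tfrac{1}{2I_{max}}$ is at least $\tfrac{p_2}{2}=\Theta(1/I_{max})$ by Bernoulli's inequality $(1-p_2)^{I_{max}}\ge 1-I_{max}p_2=\tfrac12$. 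Hence $w'_j$ is missed across all $T_2$ tests with probability at most $\big(1-p_2(1-p_2)^{I_{max}}\big)^{T_2}\le\exp(-\Theta(T_2/I_{max}))$, and a union bound over the at most $nd$ item--defective pairs (which simultaneously rules out ${\cal S}(\mathbf{y}_{\hat u_k})=\{\emptyset\}$) forces $T_2/I_{max}=\Omega(\log n)$, i.e.\ $T_2=O(I_{max}\log n)$.

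Combining the two stages, $T_A=T_1+dT_2=O\big((I_{max}+d)\log n\big)+O(I_{max}d\log n)=O(I_{max}d\log n)$, where the first term is absorbed because $I_{max}+d\le 2I_{max}d$ for $I_{max},d\ge 1$; replacing $I_{max}$ by $r$ gives the IDG-NSI bound. I expect the main obstacle to be Step~1: showing that a single fixed threshold simultaneously sieves the defectives under their worst-case (all-inhibitors-present) statistics and the normal items and inhibitors under their best-case statistics with vanishing error. This hinges on the constant-gap estimate $\Delta\ge\tfrac23$ together with the two-stage concentration argument that prevents the test count from degrading to $(I_{max}+d)^2\log n$.
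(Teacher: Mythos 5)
Your proposal is correct and follows essentially the same route as the paper: the same constant-gap estimate $1-b_{max}-q^{UB}_2\ge 1-(I_{max}+d)p_1=\tfrac23$, the same midpoint threshold $1-b_{max}(1+\tau)$, the same Step-2 failure event with per-test success probability $p_2(1-p_2)^{|{\cal I}(u_k)|}\ge p_2/2$, and the same union bounds yielding $T_1=O((I_{max}+d)\log n)$ and $T_2=O(I_{max}\log n)$, hence $T_A=O(I_{max}d\log n)$. The only cosmetic difference is in Step 1, where the paper averages the conditional Chernoff--Hoeffding bound exactly over the binomial participation count via the binomial theorem (obtaining $\left[1-p_1+p_1e^{-2(\cdot)^2}\right]^{T_1}\le \exp\left(-\Theta(T_1p_1)\right)$) instead of first concentrating $|{\cal T}_{w_j}(\mathbf{y}_1)|$ around $T_1p_1$; both give the same decay and the same linear dependence on $I_{max}+d$.
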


\begin{remark}
 The value of $\tau=\frac{1-b_{max}-q^{UB}_2}{2b_{max}}$  chosen in the above theorems implies that the decoding algorithm declares item $w_j$ to be a defective if $\frac{|{\cal S}_{w_j}(\mathbf{y})|}{|{\cal T}_{w_j}(\mathbf{y})|},\frac{|{\cal S}_{w_j}(\mathbf{y_1})|}{|{\cal T}_{w_j}(\mathbf{y_1})|}>\frac{(1-b_{max})+q^{UB}_2}{2}$. This threshold is simply an average between the worse-case positive outcome statistic for a defective and
the best-case positive outcome statistic for a normal item or an inhibitor. The values of $p$ and $p_1$ are chosen so that the former is greater than the latter.
\end{remark}

The following sub-section constitutes the proof of the above theorems. The exact number of tests required to guarantee vanishing error probability for recovery of the association graph are also obtained. The proof is exactly the same for the IDG-NSI model, but replacing $I_{max}$ by $r$.

\subsection{Error Analysis of the Proposed Algorithm} \label{sec3-subsec1}
As mentioned in Section \ref{sec2}, we require that
\begin{align*}
\underset{{\cal I}, {\cal D},{\cal E}\left({\cal I}, {\cal D}\right)}{\max}  \Pr\left\{\left({\cal I}, {\cal D},{\cal E}({\cal I},{\cal D})\right)\neq\left(\hat{\cal I}, \hat{\cal D},\hat{\cal E}(\hat{\cal I}, \hat{\cal D}\right)\right\} \leq cn^{-\delta},
\end{align*}for some constant $c>0$ and fixed $\delta>0$. For the non-adaptive pooling design, we find the number of tests $T_{NA}$ required to upper bound the error probability of the first step of the decoding algorithm by $c_1n^{-\delta_1}$ and that of the second step of the decoding algorithm by $c_2n^{-\delta_2}$, for some constants $c_1$ and $c_2$. A similar approach is taken for the two-stage adaptive pooling design to find the number of tests $T_1$ and the value of $T_2$. Finally, the values of $\delta_1$ and $\delta_2$ are chosen so that the total error probability is upper bounded by $cn^{-\delta}$, for some constant $c$ and given $\delta>0$. 

\subsubsection{Error Analysis of the First Step}
Since the first step of the decoding algorithm is the same for both the non-adaptive and adaptive pooling design, the bounds on the number of tests obtained below for adaptive pooling design applies for the non-adaptive pooling design also. The three possible error events in the first step of the decoding algorithm for both non-adaptive and adaptive pooling design are given by
\begin{enumerate}
 \item A defective is not declared as one.
 \item A normal item is declared as a defective.
 \item An inhibitor is declared as a defective. 
\end{enumerate}

Clearly, the defective that has the largest probability of a negative outcome, given by $b_{1_{max}}=\underset{u}{\max}\left(1- ~q^{(u)}_1\right)$, has the largest probability of not being declared as a defective. So, with $T_1=\beta_1 \log n$, the probability of the first error event for all the defectives can be upper bounded (using the union bound over all defectives) as

{\small \vspace{-0.3cm}
\begin{align*}
&d \sum_{t=0}^{T_1} {T_1 \choose t} p^t_1 (1-p_1)^{T_1-t}\hspace{-0.5cm}\sum_{v=tb_{max}(1+\tau)}^{t} {t \choose v} b_{1_{max}}^v (1-b_{1_{max}})^{t-v}\\
&=d \sum_{t=0}^{T_1} {T_1 \choose t} p^t_1 (1-p_1)^{T_1-t}~~\times\\&\hspace{0.5cm}\sum_{v=tb_{1_{max}}+t(b_{max}-b_{1_{max}}+b_{max}\tau)}^{t} {t \choose v} b_{1_{max}}^v (1-b_{1_{max}})^{t-v}\\
&\overset{(a)}{\leq} d \sum_{t=0}^{T_1} {T_1 \choose t} p^t_1 e^{-2t{(b_{max}-b_{1_{max}}+b_{max}\tau)}^2} (1-p_1)^{T_1-t}\\
&\overset{(b)}{=} d \left[1-p_1+p_1 e^{-2{(b_{max}-b_{1_{max}}+b_{max}\tau)}^2}\right]^{\beta_1 \log n}\\
&\overset{(c)}{\leq} d\exp\left\{-\beta_1 p_1 \log n \left(1-e^{-2{(b_{max}-b_{1_{max}}+b_{max}\tau)}^2}\right) \right\} \leq n^{-\delta_1}\\
&\overset{(d)}{\Leftarrow}  d\exp\left\{-\beta_1 p_1 \log n \left(1-e^{-2}\right)(b_{max}-b_{1_{max}}+b_{max}\tau)^2\right\}\\ & ~~~~\leq n^{-\delta_1}\\
&\Rightarrow \beta_1 \geq \frac{\left(\frac{\ln d}{\ln n}+\delta_1\right)\ln 2}{p_1(1-e^{-2})(b_{max}-b_{1_{max}}+b_{max}\tau)^2},
\end{align*}}where $(a)$ follows from Chernoff-Hoeffding bound \cite{Hof1963}\footnote{If the term ${b_{max}(1+\tau)}>1$, then the probability of the error event under consideration is equal to zero. So, it can be assumed that ${b_{max}(1+\tau)}\leq 1$.}, $(b)$ follows from binomial expansion, $(c)$ follows from the fact that $1-c \leq e^{-c}$, and $(d)$ follows from the fact that $\left(1-e^{-2 x^2}\right) \geq \left(1-e^{-2}\right)x^2$, for $0<x<1$. Using the fact that $b_{1_{max}} \leq b_{max}$, where $b_{max}$ is defined in (\ref{eqn-bmax}), the following bound on $\beta_1$ suffices.
\begin{align}
 \label{eqn-beta1-def}
& \beta_1  \geq \frac{\left(\frac{\ln d}{\ln n}+\delta_1\right)\ln 2}{p_1(1-e^{-2})(b_{max}\tau)^2}.
\end{align}Similarly, to guarantee vanishing probability for the second error event (union-bounded over all normal items) and the third error event (union-bounded over all inhibitors), it suffices that 
\begin{align}
\nonumber
& \beta_1\geq\frac{\left(\frac{\ln (n-d-r)}{\ln n}+\delta_1\right)\ln 2}{p_1(1-e^{-2})\left(1-b_{max}(1+\tau)-q_2\right)^2}, \\
\label{eqn-beta1-inh}
& \beta_1\geq\frac{\left(\frac{\ln r}{\ln n}+\delta_1\right)\ln 2}{p_1(1-e^{-2})\left(1-b_{max}(1+\tau)-q^{(s)}_{3}\right)^2},
\end{align}Since $\underset{s}{\max}~q^{(s)}_{3}\leq q_2$ and $r=o(n)$, the bound in (\ref{eqn-beta1-inh}) is asymptotically redundant for all values of $\tau$. So, substituting the upper bound on $q_2$ defined in (\ref{eqn-q2_UB}) by $q^{UB}_2$, it suffices that
\begin{align}
\label{eqn-beta1-ndni}
& \beta_1\geq\frac{\left(\frac{\ln (n-d-r)}{\ln n}+\delta_1\right)\ln 2}{p_1(1-e^{-2})\left(1-b_{max}(1+\tau)-q^{UB}_2\right)^2}
\end{align}Now, the value of $\tau$  chosen to optimize the denominators of (\ref{eqn-beta1-def}) and (\ref{eqn-beta1-ndni}) is given by $\tau=\frac{1-b_{max}-q^{(UB)}_2}{2b_{max}}$. Therefore, we have
\begin{align}\label{eqn-beta-max_p}
 \beta_1 \geq & \max\left\{\frac{4\left(\frac{\ln d}{\ln n}+\delta_1\right)\ln 2}{p_1(1-e^{-2})\left(1-b_{max}-q^{UB}_2\right)^2},\right.\\ \nonumber &\hspace{1.5cm}\left. \frac{4\left(\frac{\ln (n-d-r)}{\ln n}+\delta_1\right)\ln 2}{p_1(1-e^{-2})\left(1-b_{max}-q^{UB}_2\right)^2} \right\}.
\end{align}The term $1-b_{max}-q_2$ can be lower bounded as follows.
\begin{align*}
 1-b_{max}-q_2 & \geq (1-p_1)^{I_{max}}-  (1-(1-p_1)^d)\\
 & \geq 1-(I_{max}+d)p_1.
\end{align*} The last lower bound above follows from the fact that \mbox{$(1-p_1)^{I_{max}} \geq (1-I_{max} p_1)$} and $(1-p_1)^d \geq (1-dp_1)$. Optimizing the denominator terms of (\ref{eqn-beta-max_p}) with respect to $p_1$, we have $p_1=\frac{1}{3(I_{max}+d)}$.  Hence, using $r,d=o(n)$ in (\ref{eqn-beta-max_p}), for sufficiently large $n$ it suffices that
\begin{align} \label{eqn-beta1}
 \beta_{NA},\beta_1 \geq \frac{27\left(I_{max}+d\right)\left(\frac{\ln (n-d-r)}{\ln n}+\delta_1\right)\ln 2}{(1-e^{-2})},
\end{align}where $T_{NA}=\beta_{NA} \log n$.

\subsubsection{Error Analysis of the Second Step}
In the error analysis of the second step, we assume that all the defectives have been correctly declared. Errors due to error propagation from the first step shall be analyzed later.\\
\underline{\it Non-adaptive pooling design:}\\
The only error event for the non-adaptive pooling design in the second step is that there  does not exist a set of pools ${\cal P}_k$ such that they contain only the defective $u_k$ and none of its associated inhibitors ${\cal I}(u_k)$, and all its non-associated items appear in at least one of such pools. Denote this error event by $\mathscr{U}(u_k)$. Clearly, none of the inhibitors associated with $u_k$ will be declared as non-associated with $u_k$. This follows from the definition of the set of pools ${\cal P}_k$ and the decoding algorithm.

The probability of the favourable event that a non-associated item appears along with a defective $u_k$, but none of its associated inhibitors and none of the other defectives appear in a pool from $\mathbf{M}_{NA}$ is given by \[b^{(u_k)}\triangleq p^2_1(1-p_1)^{|{\cal I}(u_k)|}(1-p_1)^{d-1}.\]Now, probability of the error event $\mathscr{U}(u_k)$ is upper bounded by 
\begin{align*}
 \Pr\{\mathscr{U}(u_k)\}&\leq(n-d-|{\cal I}(u_k)|)\left(1-b^{(u_k)}\right)^{T_{NA}}\\ &\leq (n-d-|{\cal I}(u_k)|) e^{-T_{NA}b^{(u_k)}} \leq n^{-\delta_2}, \text{ if}\\
& \hspace{-0.7in}~\beta_{NA} \geq \frac{\left(\frac{\ln (n-d-|{\cal I}(u_k)|)}{\ln n}+\delta_2\right)\ln 2}{b^{(u_k)}}.
\end{align*}Since $(1-p_1)^{|{\cal I}(u_k)|}\geq (1-p_1)^{I_{max}} \geq (1-I_{max}p_1)$ and $(1-p_1)^{d-1} \geq (1-dp_1)$, substituting for $p_1$, it suffices that
\begin{align}\label{eqn-beta_NA}
\beta_{NA} \geq \frac{81}{4} (I_{max}+d)^2 \left(\frac{\ln (n-d)}{\ln n}+\delta_2\right)\ln 2.
\end{align}\\
\underline{\it Adaptive pooling design:}\\
Like in non-adaptive pooling design the only error event, denoted by ${\mathscr E}{(u_k)}$, is that items $w_j$ not associated with $u_k$ are declared as associated inhibitors, i.e., the item $w_k$ does not appear in any of the positive outcome tests ${\cal S}(\mathbf{y}_{u_k})$.  Clearly, none of the inhibitors associated with $u_k$ will be declared as non-associated with $u_k$.

Let $T_2=\beta_2 \log n$. The number of tests required to guarantee vanishing error probability for the error event  ${\mathscr E}{(u_k)}$ is evaluated as follows. Let $w_j \notin {\cal I}(u_k)$. Define

\begin{align*}
a^{(u_k)}_{w_j} &\triangleq \text{ Pr}\left\{\mathbf{y}_{{u}_{k}}(l)=1\left|\right. w_j \text{ is present in $l^{\text{th}}$-test}\right\}\\
& \geq (1-p_2)^{|{\cal I}(u_k)|} \triangleq a^{(u_k)}.
\end{align*}Now, we have
\begin{align*} 
& \text{Pr}\left\{{\mathscr E}{(u_k)}\right\}\leq \left(n-d-|{\cal I}(u_k)|\right)\left(1-a^{(u_k)}p_2\right)^{T_2} \leq n^{-\delta_2}\\
 \Leftarrow ~& \beta_2 \geq  \frac{\left(\frac{\ln \left(n-d-|{\cal I}(u_k)|\right)}{\ln n}+\delta_2\right)\ln 2}{p_2 \left(1-p_2\right)^{|{\cal I}(u_k)|}}.
\end{align*}Using the fact that $\left(1-p_2\right)^{|{\cal I}(u_k)|} \geq 1-|{\cal I}(u_k)|p_2$, and substituting $p_2=\frac{1}{2I_{max}}$, we have the following bound.
\begin{align}\nonumber
&\beta_2 \geq  4 I_{max}\left(\frac{\ln \left(n-d-|{\cal I}(u_k)|\right)}{\ln n}+\delta_2\right)\ln 2\\
\label{eqn-beta2}
\Leftarrow ~&\beta_2 \geq  4 I_{max}\left(\frac{\ln \left(n-d\right)}{\ln n}+\delta_2\right)\ln 2.
\end{align}

\subsubsection{Analysis of Total Error Probability}

Assuming that the target total error probability is $O(n^{-\delta})$, the values of $\delta_1$ and $\delta_2$ need to be determined. Towards that end, define the following events.
\begin{align*}
{\mathscr E}_{ij} \triangleq & \text{ Event of declaring $(w_i,w_j), i\neq j$, to be an associated }\\&\text{ pair,}\\
{\mathscr W} \triangleq & \text{ Event that at least one actual defective has not been}\\&\text{ declared as a defective.}
\end{align*}Let ${\cal E}$ denote the correct association pattern for some realization $\{{\cal I},{\cal D}\}$. Now, the total probability of error is given by
\begin{align} \nonumber
&\text{Pr}\left\{\underset{(w_i,w_j)\notin {\cal E}}{\bigcup}{\mathscr E}_{ij}\bigcup {\mathscr W}\right\} \leq \sum_{(w_i,w_j)\notin {\cal E}}\text{Pr}\left\{{\mathscr E}_{ij}\right\}+\text{Pr}\left\{{\mathscr W}\right\}\\
\label{eqn-tpe1}
\leq &\sum_{w_i\neq w_j} \sum_{w_j\in {\cal N \cup \cal I}}\text{Pr}\left\{{\mathscr E}_{ij}\right\}+\sum_{w_i\notin {\cal I}(w_j)} \sum_{w_j\in {\cal D}}\text{Pr}\left\{{\mathscr E}_{ij}\right\}   \\ \nonumber &~~~~ +  \text{ Pr}\left\{{\mathscr W}\right\}\\
\label{eqn-tpe2}
< & n\times 2n^{-\delta_1}+ d n^{-\delta_2}  + n^{-\delta_1}.
\end{align}There are two possible ways in which the event ${\mathscr E}_{ij}$, for $(w_i,w_j)\notin {\cal E}$, can occur. One possibility is that the item $w_j$ has been erroneously declared as a defective in the first step of the algorithm, and hence any item $w_i$ declared to be associated with $w_j$ is an erroneous association. The first term in (\ref{eqn-tpe1}) represents this possibility. The other possibility is that $w_j$ has been correctly identified as a defective, but the item $w_i$ is erroneously declared to be associated with $w_j$. The second term in (\ref{eqn-tpe1}) represents this possibility. The last term accounts for the fact that a defective might be missed out in the first step of the algorithm. Note that the other two terms do not capture this error event. Finally, (\ref{eqn-tpe2}) follows from the error analysis of the first and second steps of the decoding algorithm. Therefore, if the target error probability is $O(n^{-\delta})$, then choose $\delta_1,\delta_2=\delta+1$.

Recall that the number of tests required for non-adaptive and adaptive pooling designs are given by $T_{NA}=\beta_{NA}\log n$ and $T_A=T_1+dT_2=(\beta_1 +d\beta_2) \log n$ respectively. Therefore, from (\ref{eqn-beta1}), (\ref{eqn-beta_NA}), and (\ref{eqn-beta2}) we have that $T_{NA}=O\left((I_{max}+d)^2 \log n\right)$ and $T_A=O\left(I_{max} d \log n\right)$.

\subsection{Adaptation for the IDG-NSI Model}

The only modification required in the pooling design and decoding algorithm proposed for the IDG-WSI model to adapt it for the IDG-NSI model is that $I_{max}$ is replaced by $r$. For the sake of clarity, we list the only changes below.

\begin{enumerate}
 \item The pooling design parameters are chosen as $p=p_1=\frac{1}{3(r+d)}$, $p_2=\frac{1}{2r}$.
 \item In Step $1$ of the decoding algorithm the threshold for identifying the defectives is chosen as $|{\cal S}_{w_j}(\mathbf{y}_1)| > |{\cal T}_{w_j}(\mathbf{y}_1)|[1-b_{max}(1+\tau))]$, where $b_{max}=1-(1-p_1)^r$. Intuitively, this worst-case threshold corresponds to a scenario where every inhibitor inhibits every defective, i.e., the $1$-inhibitor model.
 \item The values of $\beta_{NA}$, $\beta_1$ and $\beta_2$ are chosen as
 \begin{align*}
\beta_{NA}\geq &\max  \left\{\frac{27\left(r+d\right)\left(\frac{\ln (n-d-r)}{\ln n}+\delta_1\right)\ln 2}{(1-e^{-2})}\right.,\\
\nonumber
 & \left.\frac{81}{4} (r+d)^2 \left(\frac{\ln (n-d)}{\ln n}+ \delta_2\right)\ln 2  \right\},\\
\beta_1 \geq & \frac{27\left(r+d\right)\left(\frac{\ln (n-d-r)}{\ln n}+\delta_1\right)\ln 2}{(1-e^{-2})},\\
 \beta_2 \geq &  4 r\left(\frac{\ln \left(n-d\right)}{\ln n}+\delta_2\right)\ln 2.
 \end{align*}
\end{enumerate}
Hence, the total number of tests required for the IDG-NSI model scales as $T_{NA}=O\left((r+d)^2 \log n\right)$ for the non-adaptive pooling design and $T_A=O(rd \log n)$ for the two-stage adaptive pooling design.

In the next section, lower bounds on the number of tests for non-adaptive and adaptive pooling designs are obtained.

\section{Lower Bounds for Non-Adaptive and Adaptive Pooling Design} \label{sec4}
In this section, two lower bounds on the number of tests required for non-adaptive pooling designs for solving the IDG-NSI and IDG-WSI problems with vanishing error probability are obtained. One of the lower bounds is simply obtained by counting the entropy in the system and this lower bound also holds good for adaptive pooling designs. The other lower bound is obtained using a lower bound result for the $1$-inhibitor model which is stated below. We recall that all the inhibitors inhibit the expression of every defective in the $1$-inhibitor model.

\begin{theorem}[Th. $1$, \cite{GEJS2014}]\label{thm-LB_1_Inh_Model}
An asymptotic lower bound on the number of tests required for non-adaptive pooling designs in order to classify $r$ inhibitors amidst $d$ defectives and $n-d$ normal items in the $1$-inhibitor model is given by $\Omega\left(\frac{r^2}{d \log{\frac{r}{d}}}\log n\right)$, in the $d=o(r), r=o(n)$ regime\footnote{Though Theorem $1$ in \cite{GEJS2014} is stated for the classification of both the defectives and inhibitors in the $1$-inhibitor model, it is also valid for classification of inhibitors alone. This is because the entropy in the system is dominated by the number of inhibitors, in the large inhibitor regime.}.
\end{theorem}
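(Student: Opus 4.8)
The plan is to prove this converse by a Fano/entropy-counting argument, exploiting the fact that in the $1$-inhibitor model a test outcome is positive only when the pool \emph{simultaneously} contains a defective and avoids every inhibitor; this ``gating'' by the defectives is exactly what forces the $d$ in the denominator. I place the uniform prior over configurations $({\cal I},{\cal D})$ described in Section~\ref{sec2}; since the maximum error probability upper-bounds the average error probability, any lower bound derived for the average-case (Bayesian) problem is a valid lower bound for the metric in (\ref{eqn-error_metric}). In the $d=o(r)$ regime the configuration entropy is dominated by the choice of inhibitors, so $H({\cal I},{\cal D}) \geq \log \binom{n-d}{r} = \Omega(r \log (n/r))$, which is $\Omega(r\log n)$ in the regime $r=o(n)$ of interest.

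First I would invoke Fano's inequality: writing $\mathbf{y}=(y_1,\dots,y_T)$ for the outcome vector, vanishing error probability forces $I\big(({\cal I},{\cal D});\mathbf{y}\big) \geq (1-o(1))\,H({\cal I},{\cal D}) = \Omega(r\log n)$. Because the pooling design is non-adaptive, the rows are fixed in advance, each $y_t$ has a well-defined marginal, and $I\big(({\cal I},{\cal D});\mathbf{y}\big) \leq H(\mathbf{y}) \leq \sum_{t=1}^{T} H(y_t)$. Thus it suffices to upper-bound the entropy of a \emph{single} test outcome and then sum over the $T$ tests.

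The heart of the argument is this per-test entropy bound. Fix a test (row) that pools $m$ of the items. Its outcome is $1$ iff the pool meets ${\cal D}$ and misses ${\cal I}$, so, approximating the without-replacement selection of ${\cal I},{\cal D}$ by the estimates $\Pr[\text{pool meets }{\cal D}]\approx md/n$ and $\Pr[\text{pool misses }{\cal I}]\approx(1-r/n)^m\approx e^{-rm/n}$, the success probability is $p \approx (md/n)\,e^{-rm/n}$. Optimizing over the pool size $m$, this product peaks at $m \asymp n/r$ and never exceeds $\Theta(d/r)$; since $d=o(r)$ we have $p=o(1)$, and therefore $H(y_t)=h(p) \leq p\log(1/p)+p\log e = O\!\left(\tfrac{d}{r}\log\tfrac{r}{d}\right)$, where the $\log(r/d)$ arises precisely because the per-test success probability is as small as $\Theta(d/r)$. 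Summing over the $T$ tests and combining with the Fano lower bound gives $T\cdot O\!\left(\tfrac{d}{r}\log\tfrac{r}{d}\right) \geq \Omega(r\log n)$, i.e.\ $T=\Omega\!\left(\tfrac{r^2}{d\log(r/d)}\log n\right)$, as claimed.

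I expect the per-test entropy bound to be the only real obstacle; the rest (the prior, Fano, and the subadditivity $H(\mathbf{y})\le\sum_t H(y_t)$) is routine. Two points need care. First, the selection of ${\cal I}$ and ${\cal D}$ is without replacement, so the independent-Bernoulli estimates $md/n$ and $e^{-rm/n}$ must be justified rigorously (via hypergeometric ratio/tail bounds) uniformly in $m$. Second, one must verify that $p_m:=(md/n)e^{-rm/n}$ --- and, more to the point, the entropy $h(p_m)$ --- is genuinely maximized near $m\asymp n/r$ with peak value $\Theta\!\big(\tfrac{d}{r}\log\tfrac{r}{d}\big)$, so that \emph{no} choice of pool size lets a single non-adaptive test leak more than $O\!\big(\tfrac{d}{r}\log\tfrac{r}{d}\big)$ bits about the configuration; this monotonicity/peak analysis of $m\mapsto h(p_m)$ is what delivers the exact $\tfrac{1}{d\log(r/d)}$ scaling rather than a cruder bound. (Conditioning on ${\cal D}$ would remove the $\log(r/d)$ and sharpen the bound to $\Omega(\tfrac{r^2}{d}\log n)$, but the unconditional outcome entropy above already yields the stated theorem.)
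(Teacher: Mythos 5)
Your argument is sound and, as far as this paper is concerned, there is nothing to compare it against line by line: Theorem~\ref{thm-LB_1_Inh_Model} is imported by citation from \cite{GEJS2014} and is not proved here. That said, your route --- a uniform prior plus Fano to get $I(({\cal I},{\cal D});\mathbf{y})=\Omega(r\log n)$, the subadditivity bound $H(\mathbf{y})\le\sum_t H(y_t)$, and a per-test entropy bound obtained by maximizing the positive-outcome probability $p_m\approx(1-(1-m/n)^d)(1-m/n)^r=\Theta(d/r)$ over the pool size $m$ (peaking at $m\asymp n/r$), whence $h(p_m)=O\bigl(\tfrac{d}{r}\log\tfrac{r}{d}\bigr)$ --- is exactly the machinery this paper itself attributes to \cite{GEJS2014} and redeploys for its second and third lower bounds (the $\sum_l H[\mathbf{y}(l)]$ decomposition and the optimization of the outcome entropy over the pool size $g_l$), so you have reconstructed essentially the intended proof, with the only routine gaps being the ones you already flag (hypergeometric vs.\ Bernoulli approximations, and the standard caveat that $\log\binom{n-d}{r}=\Omega(r\log n)$ needs $r$ polynomially smaller than $n$).
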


The second lower bound in the following theorem dominates in the large inhibitor regime, i.e., the number of inhibitors is large compared to the number of defectives. It conveys the number of tests required to identify the inhibitors alone. Though the inhibitors outnumber the defectives, they can be identified only in the presence of an associated defective. So, the worst scenario (in terms of number of tests) happens when most inhibitors have to be identified using a single defective, or in other words, all of the inhibitors happen to inhibit a single defective. The third lower bound in the following theorem exploits the intuition gained from Step $2$ of the decoding algorithm for non-adaptive pooling design (given in Section \ref{sec3}). This lower bound is obtained by characterizing the minimum number of tests required to identify the associations of every defective. Since no two defectives might be associated with a single inhibitor, it is necessary that no two defectives participate in the same test from which the associations of a defective are identified. Otherwise, the non-associated defective masks the effect of association of the associated inhibitor-defective pair. This might result in wrongly declaring the associated inhibitor-defective pair to be non-associated.

Throughout this section, lowercase alphabets are used for defectives and inhibitors whose realizations are revealed by a genie and uppercase alphabets are used for those whose realizations are unknown.

\begin{theorem}\label{thm-LB_GTI_ID_NSI}
An asymptotic lower bound on the number of tests required for non-adaptive pooling designs for solving the IDG-NSI problem with vanishing error probability for $r,d=o(n)$ is given by 
\begin{align*}
\max\left\{\Omega\left((r+d)\log n+rd\right), \Omega\left(\frac{r^2}{\log r} \log n\right), \Omega(d^2)\right\}. 
\end{align*}
\end{theorem}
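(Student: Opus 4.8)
The plan is to establish each of the three lower bounds separately, since the overall bound is their maximum. The three bounds arise from three qualitatively different information-theoretic and combinatorial obstructions, so I would treat them as independent sub-arguments and then combine.

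First I would establish the counting (entropy) bound $\Omega((r+d)\log n + rd)$. The idea is that any valid non-adaptive scheme must be able to distinguish all possible problem instances, and the number of tests $T$ upper-bounds the number of distinguishable outcome vectors by $2^T$. I would count the instances: choosing which $r+d$ of the $n$ items are the inhibitors and defectives contributes roughly $(r+d)\log n$ bits of entropy (using $\log\binom{n}{r+d}\approx (r+d)\log\frac{n}{r+d}$ and $r,d=o(n)$), and specifying the association pattern, i.e.\ the bipartite graph between $r$ inhibitors and $d$ defectives, contributes up to $rd$ bits since each of the $rd$ potential edges is present or absent (subject to the mild constraint that each inhibitor has at least one edge, which does not change the scaling). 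Since the maximum-error criterion in (\ref{eqn-error_metric}) forces vanishing error uniformly over instances, a standard Fano-type or direct counting argument gives $T \geq \log(\text{number of instances}) \gtrsim (r+d)\log n + rd$. This bound makes no use of adaptivity, so it holds for adaptive designs too.

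Next I would derive the $\Omega\!\left(\frac{r^2}{\log r}\log n\right)$ bound by reduction to the $1$-inhibitor lower bound of Theorem~\ref{thm-LB_1_Inh_Model}. The key observation, flagged in the paragraph preceding the theorem, is that the worst case forces all $r$ inhibitors to be associated with a \emph{single} revealed defective; a genie reveals the identity of this one defective and the fact that every inhibitor inhibits it. Conditioned on this, the remaining problem of sieving the $r$ inhibitors from the normal items, using the positive/negative outcomes filtered through that single defective, is exactly the $1$-inhibitor classification problem with $d=1$. Substituting $d=1$ into the bound $\Omega\!\left(\frac{r^2}{d\log(r/d)}\log n\right)$ of Theorem~\ref{thm-LB_1_Inh_Model} yields $\Omega\!\left(\frac{r^2}{\log r}\log n\right)$. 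I would need to argue that revealing side information only \emph{decreases} the number of required tests, so a lower bound for the easier (genie-aided) problem is a valid lower bound for the original.

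Finally, for the $\Omega(d^2)$ bound I would use the combinatorial argument sketched in the setup: to correctly identify the associations of a given defective, the tests that ``read off'' those associations must isolate that defective from the other $d-1$ defectives, otherwise a non-associated defective can mask an associated inhibitor-defective pair and corrupt the declaration. I would formalize this as requiring, for the recovery of each defective's associations to be possible with high probability, that for each of the $d$ defectives there exist tests containing it but none of the other defectives; a union/counting argument over the $\binom{d}{2}$ pairs or over the $d$ defectives then forces $\Omega(d^2)$ tests. The main obstacle I anticipate is this third bound: making precise why the masking phenomenon is \emph{unavoidable} for \emph{any} correct scheme — not merely for the particular decoding algorithm of Section~\ref{sec3} — requires an adversarial or averaging argument over association patterns (using that $\mathcal{E}(\mathcal{I},\mathcal{D})$ is uniformly distributed), rather than an algorithm-specific observation. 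I would likely construct two association patterns that differ in a single edge yet induce identical outcome statistics whenever the isolation condition fails, thereby showing indistinguishability and forcing the $\Omega(d^2)$ test count.
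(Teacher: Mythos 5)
Your first two bounds follow essentially the paper's route. The counting bound is the same entropy argument (the paper lower-bounds the number of association patterns by $\prod_{k=1}^{r}\binom{d}{d/2}\geq 2^{rd/2}$, which matches your "$rd$ bits of edges" up to constants). For the $\Omega\bigl(\frac{r^2}{\log r}\log n\bigr)$ bound the paper also reduces to the $1$-inhibitor bound with $d=1$, but note two details you gloss over: the paper's genie reveals the \emph{other} $d-1$ defectives (chosen to have no associated inhibitors, so that any test containing them is deterministically positive and hence uninformative and can be discarded), while the single all-inhibited defective stays \emph{unknown} — this is what makes the residual problem literally the $1$-inhibitor instance with one unknown defective to which Theorem~\ref{thm-LB_1_Inh_Model} applies. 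If instead you reveal that one defective, the genie-aided problem is no longer the problem Theorem~\ref{thm-LB_1_Inh_Model} is stated for, so you would need to re-justify the bound; and you must in any case say something about the remaining $d-1$ defectives.

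The genuine gap is in the $\Omega(d^2)$ bound. Your proposed formalization — "for each of the $d$ defectives there must exist tests containing it but none of the others" — yields only $\Omega(d)$ tests, one isolating test per defective; neither a union over the $d$ defectives nor over the $\binom{d}{2}$ pairs produces a quadratic count, and you do not supply the missing factor of $d$. The paper's actual mechanism is quantitative, not combinatorial: it fixes a hard instance ($r-1$ inhibitors tied to one revealed defective, one unknown inhibitor $S_1$ associated with exactly one of the $d-1$ unknown defectives), applies Fano's inequality so that the tests must supply $\log(d-1)$ bits about that single edge, and then shows that a test of size $g_l$ has nonzero outcome entropy only on realizations where the pool contains \emph{exactly one} unknown defective together with $S_1$ — an event of probability $\frac{g_l(g_l-1)\binom{n-r-g_l}{d-2}}{\binom{n-r}{d-1}(n-r-d+1)}$ under the uniform prior. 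Optimizing over $g_l$ (the optimum is $g_{opt}=\frac{2n}{d}+o(\frac{2n}{d})$, which requires the lemma in the paper) shows this probability is $O(1/d)$ and the conditional outcome entropy is $h\approx\frac{\log d}{d}$, so each test contributes at most $O\bigl(\frac{\log d}{d^2}\bigr)$ bits, forcing $T_{NA}=\Omega(d^2)$. Your alternative idea of exhibiting two association patterns differing in one edge with identical outcome statistics is a plausible starting point, but as written it establishes only that isolation is necessary, not how \emph{many} tests the isolation-plus-learning requirement costs; without the per-test entropy bound and the pool-size optimization, the quadratic scaling is not reached.
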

\begin{proof}
The proof for the first lower bound on the number of tests follows by lower bounding the total number of possible realizations of the sets of inhibitors, defectives, and association patterns.
\begin{align}
\nonumber
 T_{NA} &\geq  H\left({\cal I}, {\cal D},{\cal E}({\cal I},{\cal D})\right)\\
 \label{eqn-LB_ent}
 &=H({\cal D})+H\left({\cal I}|{\cal D})+H({\cal E}({\cal I},{\cal D})|({\cal I},{\cal D})\right)\\
\nonumber
 &=\log \left({n \choose d} {n-d \choose r} \sum_{i_1=1}^{d}\cdots \sum_{i_r=1}^{d} \prod_{j=1}^{r} {d\choose i_j}\right)\\
\nonumber
 &\geq \log \left({n \choose d} {n-d \choose r}  \prod_{k=1}^r {d \choose \frac{d}{2}}\right)\\
\label{eqn-LB_ent_2}
 &=\Omega \left((r+d) \log n + rd\right),
\end{align}where $i_j$ denotes the number of defectives that the $j^{\text{th}}$-inhibitor can be associated with, and the last step follows by using Stirling's lower bound ${d \choose \frac{d}{2}}\geq 2^\frac{d}{2}$. This lower bound on the number of tests is also valid for {\it adaptive pooling designs}.

The second lower bound for non-adaptive pooling designs is obtained as follows. Assume that it is required to identify the inhibitors alone. Clearly, this requires lesser number of tests than the problem of identifying the association graph. Since the objective is to satisfy the error metric in (\ref{eqn-error_metric}), the error probability criterion \begin{align} \label{eqn-error_metric_real}
\Pr\left\{\hat{\cal I}\neq {\cal I}\right\} \leq cn^{-\delta}
\end{align}has to be satisfied for all possible association patterns ${\cal E}$ on all possible realizations of $({\cal I},{\cal D})$. Let $PD\text{-}DA$ denote a pooling design, decoding algorithm tuple that satisfies (\ref{eqn-error_metric_real}), and $\mathscr P$ denotes the set of all such tuples. Further, let $T_{NA}(PD\text{-}DA, {\cal I},{\cal D}, {\cal E})$ denote the minimum number of tests required by $PD\text{-}DA$ to satisfy  (\ref{eqn-error_metric_real}) for a particular realization of $\left({\cal I},{\cal D}, {\cal E}\right)$. We now have to determine the lower bound $\underset{{\mathscr P}}{\inf} \underset{\left({\cal I},{\cal D}, {\cal E}\right)}{\sup} T_{NA}$. We now have
\begin{align*}
\inf_{\mathscr P} \sup_{\left({\cal I},{\cal D}, {\cal E}\right)} T_{NA} \geq \inf_{\mathscr P} \sup_{\left({\cal I},{\cal D}, {\cal E}'\right)} T_{NA},
\end{align*}
where ${\cal E}'$ denotes a specific class of association pattern represented in Fig. \ref{fig-ass_LB}.
\begin{figure}[htbp]  
\centering
\includegraphics[totalheight=2.5in,width=3.4in]{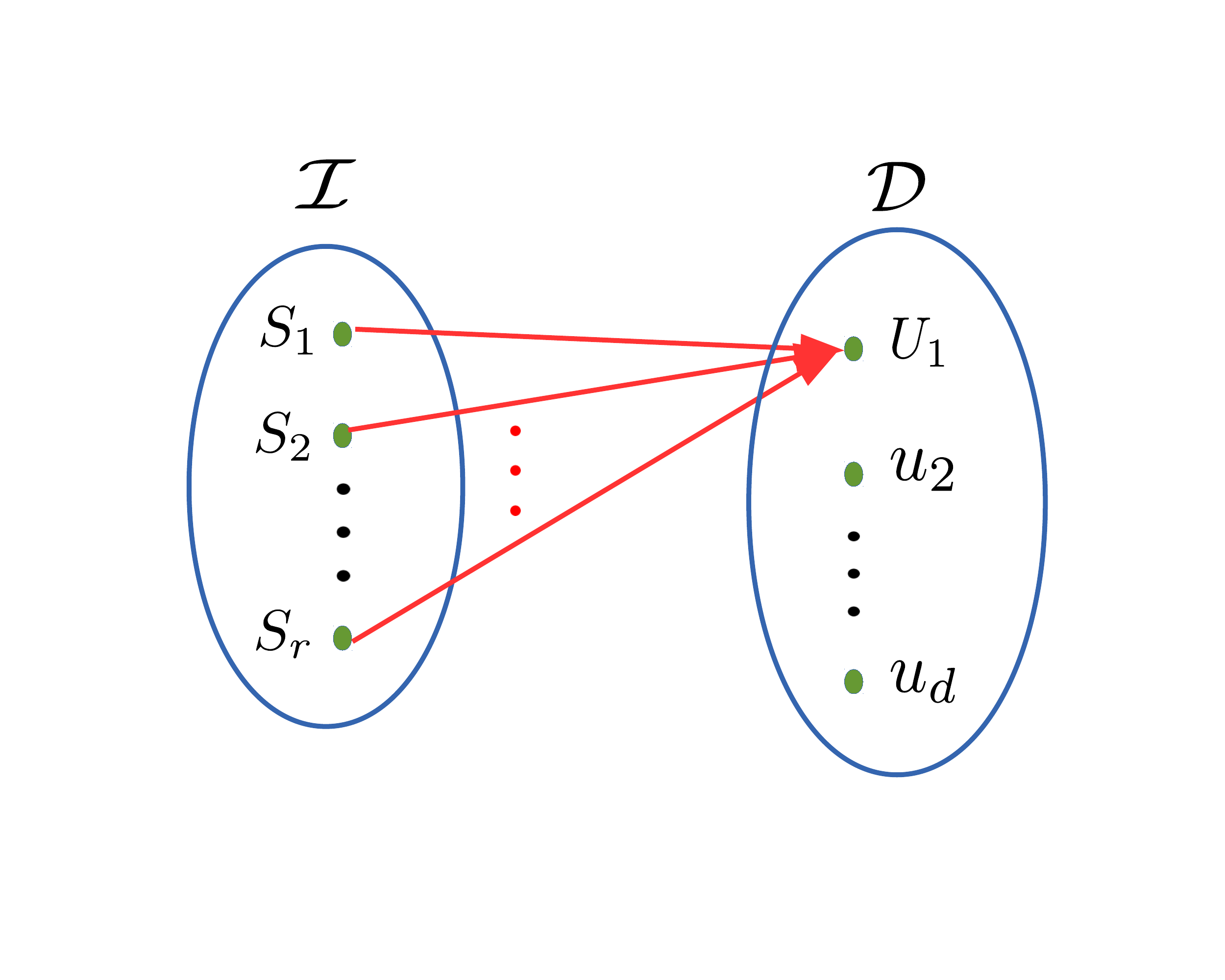}
\vspace{-1cm}
\caption{The class of association pattern used to obtain the second lower bound, illustrated for some realization of $({\cal I},{\cal D})$. A single defective is associated with all the inhibitors, but none of the other defectives are associated with any inhibitor.}
\label{fig-ass_LB}
\end{figure}Now, assume that a genie reveals the set of defectives ${\mathscr D}'\triangleq \{u_2,\cdots,u_d\}$ which are not associated with any of the inhibitors. A lower bound for this problem with side information from the genie is clearly a lower bound for the original problem. A lower bound on the number of tests $T'_{NA}$ for this problem is given by \cite{GEJS2014}\footnote{A similar expression is used to obtain Theorem $3$. This is derived formally using Fano's inequality. The steps involved are illustrated in the proof of the third lower bound.} 
\begin{align*}
 \sum_{l=1}^{T'_{NA}}H[\mathbf{y}(l)]=\Omega\left(\log {n-d \choose r}\right).
\end{align*}
Note that the presence of any defective from the set ${\mathscr D}'$ in a pool always gives a positive outcome, and hence provides zero information for distinguishing the inhibitors from the rest of the items as the entropy of such an outcome is zero. So, we assume that none of the tests contain items from  the set ${\mathscr D}'$. Therefore, the inhibitor identification problem for items with the association pattern as given in Fig. \ref{fig-ass_LB} is now reduced to the problem of identifying $r$ inhibitors amidst $n-d$ normal items and one defective item in the $1$-inhibitor model, where $d=o(n)$. For this problem, using Theorem \ref{thm-LB_1_Inh_Model}, it follows that the lower bound on the number of tests is given by $T'_{NA}= \Omega\left(\frac{r^2}{\log r} \log n\right)$. Hence, this is also a lower bound on the number of tests required to identify the association graph with vanishing worst case error probability.

The evaluation of the third lower bound is involved and is obtained as follows. Since the second lower bound is tighter when $r\geq d$, here we assume that $r<d$. Using similar arguments as in the second lower bound, a lower bound on the number of tests for the following reduced problem is a lower bound for the original problem. Let $\{S_2,\cdots,S_r\}$ denote a set of inhibitors associated with exactly one defective $U_d$. The defective $U_d$ is not inhibited by the inhibitor $S_1$. Further, the inhibitor $S_1$ is associated with exactly one of the defectives $\{U_1,\cdots,U_{d-1}\}$. This association graph is depicted in Fig. \ref{fig-LB_d^2}.
\begin{figure}[htbp]
\centering
\includegraphics[totalheight=2.8in,width=3.5in]{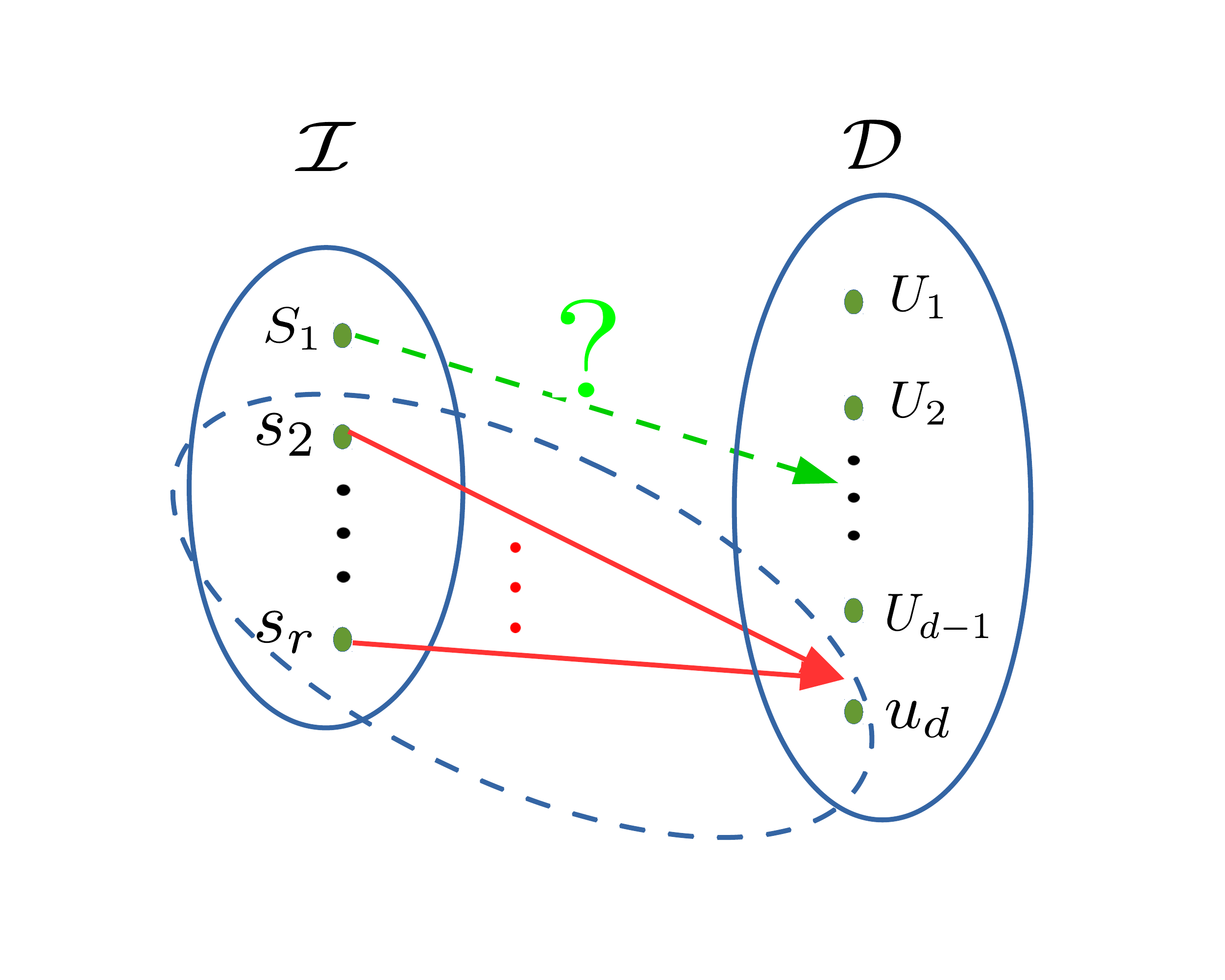}
\vspace{-0.5cm}
\caption{A possible association graph, where $r-1$ inhibitors and a single defective are associated only among themselves. The remaining inhibitor is associated with exactly one of the remaining defectives.}
 \label{fig-LB_d^2}
\end{figure}Let a genie reveal the set of inhibitors ${\mathscr I}_{-S_1}\triangleq\{s_2,\cdots,s_r\}$ and the defective $u_d$. The ``residual message'' in the system is now given by $W'\triangleq \left\{S_1,{\cal D}_{-u_d} ,{\cal E}\left(S_1,{\cal D}_{-u_d}\right)\right\}$, where ${\cal D}_{-u_d} \triangleq {\cal D}\backslash {u_d}$.

For the reduced problem, we have
\begin{align}\label{eqn-Pe_max}
&\underset{\underset{{\cal E}\left({S}_1, {{\cal D}}_{-u_d}\right)}{{S}_1, {{\cal D}}_{-u_d},}}{\max}  \Pr\left\{\left(\hat{S}_1, \hat{{\cal D}}_{-u_d},\hat{\cal E}\left(\hat{S}_1, \hat{{\cal D}}_{-u_d}\right)\right)\right.\\\nonumber& \left.\hspace{2cm}\neq \left({S}_1, {{\cal D}}_{-u_d},{\cal E}\left({S}_1, {{\cal D}}_{-u_d}\right)\right)\right\}\\
\label{eqn-Pe_avg}
\geq & ~\mathbb{E}_{f}\left[\Pr\left\{\left(\hat{S}_1, \hat{{\cal D}}_{-u_d},\hat{\cal E}\left(\hat{S}_1, \hat{{\cal D}}_{-u_d}\right)\right)\right.\right.\\\nonumber& \left.\left.\hspace{2cm}\neq \left({S}_1, {{\cal D}}_{-u_d},{\cal E}\left({S}_1, {{\cal D}}_{-u_d}\right)\right)\right\}\right]\triangleq P_{e_{avg}},
\end{align}where $f$ denotes a probability mass function of the association graph such that 
\begin{align}\label{eqn-f}
&\Pr\{(s_1,u)\in {\cal E}\left({s}_1, {{\mathscr D}}_{-u_d}\right)\}=\frac{1}{d-1}, u \in {\mathscr D}_{-u_d},\\
\nonumber
&\Pr\{{s}_1, {{\mathscr D}}_{-u_d}\}=\frac{1}{{n-r \choose d-1}(n-r-d+1)}
\end{align}for any realization of $(S_1,{\cal D}_{-u_d})$ given by $({s}_1, {{\mathscr D}}_{-u_d})$. So, a lower bound on the number of tests required to achieve vanishing average error probability $P_{e_{avg}}$ in (\ref{eqn-Pe_avg}) is also a lower bound on the number of tests required to achieve vanishing maximum error probability in (\ref{eqn-Pe_max}). These in turn give a lower bound on the number of tests for the original problem.

Using Fano's inequality, we have\footnote{For brevity, we omit the conditioning on ${\cal E}\left({\mathscr I}_{-S_1},u_d\right)$, which is also revealed by the genie, in the entropy and mutual information terms.}
{\begin{align}\nonumber
&H[{\cal E}\left(S_1,{\cal D}_{-u_d}\right)|S_1,{\cal D}_{-u_d},{\mathscr I}_{-S_1},u_d]\\ \label{eqn-sub1}&= \frac{1}{{n-r \choose d-1}(n-r-d+1)}\sum_{S_1,{\cal D}_{-u_d}}  \log (d-1) =\log (d-1) \\\nonumber
&\leq  1 + P_e H[{\cal E}\left(S_1,{\cal D}_{-u_d}\right)|S_1,{\cal D}_{-u_d},{\mathscr I}_{-S_1},u_d] \\\nonumber&+ I\left[{\cal E}\left(S_1,{\cal D}_{-u_d}\right);\mathbf{y}|S_1,{\cal D}_{-u_d},{\mathscr I}_{-S_1},u_d\right]\\
\label{eqn-cond_red_H}
&\leq  1 + P_e \log (d-1) + H\left[\mathbf{y}|S_1,{\cal D}_{-u_d},{\mathscr I}_{-S_1},u_d\right],
\end{align}}where (\ref{eqn-sub1}) is obtained using (\ref{eqn-f}), and $P_e\triangleq \Pr\{\hat{\cal E}\left(S_1,{\cal D}_{-u_d}\right)\neq {\cal E}\left(S_1,{\cal D}_{-u_d}\right)\}\leq P_{e_{avg}}$ denotes the average error probability in declaring the residual association pattern\footnote{The inequality holds because $\mathbb{E}\left[\mathbb{I}\left(\hat{\cal E}\left(S_1,{\cal D}_{-u_d}\right)\neq {\cal E}\left(S_1,{\cal D}_{-u_d}\right) \right)\right]$ \mbox{$\leq \mathbb{E}\left[\mathbb{I}\left(\left(\hat{S}_1,\hat{\cal D}_{-u_d},\hat{\cal E}\left(S_1,{\cal D}_{-u_d}\right)\right)\neq \left(S_1,{\cal D}_{-u_d},{\cal E}\left(S_1,{\cal D}_{-u_d}\right)\right) \right)\right]$}, where $\mathbb{E}[.]$ and $\mathbb{I}(.)$ represent the expectation operator and the indicator function respectively.}. The summation term in (\ref{eqn-sub1}) denotes summation over all possible realizations of $S_1,{\cal D}_{-u_d}$. Using the fact that conditioning reduces entropy in (\ref{eqn-cond_red_H}), we have
\begin{align} \nonumber
&{\footnotesize \sum_{l=1}^{T_{NA}}} H\left[\mathbf{y}(l)|S_1,{\cal D}_{-u_d},{\mathscr I}_{-S_1},u_d\right]\\
\label{eqn-lb_no_tests}
&\geq (1- P_e)\log (d-1)  -1.
\end{align}
The presence of items from the set $\{{\mathscr I}_{-S_1},u_d\}$ in a test can either reduce the entropy or leave the entropy of the test outcome unaffected. So, we consider only pooling designs that do not contain any item from the set $\{{\mathscr I}_{-S_1},u_d\}$. Therefore, the entropy of a test is dependent only on the realization of $S_1,{\cal D} \backslash {u_d}$, i.e.,
\begin{align*}
&  H\left[\mathbf{y}(l)|S_1,{\cal D}_{-u_d},{\mathscr I}_{-S_1},u_d\right]\\
=&H\left[\mathbf{y}(l)|S_1,{\cal D}_{-u_d}\right]\\=&\sum_{S_1,{\mathscr D}_{-u_d}}\Pr\left\{s_1,{\mathscr D}_{-u_d}\right\} H\left[\mathbf{y}(l)|s_1,{\mathscr D}_{-u_d}\right].
\end{align*}

Suppose that we are given a pool of $g_l$ items for the $l^\text{th}$ test. The entropy of the $l^\text{th}$ test outcome is non-zero only for those realizations of $S_1,{\cal D} \backslash {u_d}$ for which the $l^\text{th}$ pool contains exactly one defective and the inhibitor. This is because, otherwise, there is no randomness in the test outcome. There are $g_l(g_l-1){n-r-g_l \choose d-2}$ such possible realizations for $2\leq g_l\leq (n-r-d+2)$, and none for $g_l=0,1$ and for $g_l> (n-r-d+2)$. For each of these realizations, with $2\leq g_l\leq n-r-d+2$, the entropy of the test outcome is given by \[h\triangleq \frac{1}{d-1}\log (d-1)+\frac{d-2}{d-1}\log \left(\frac{d-1}{d-2}\right).\] Therefore, we have
\begin{align}\nonumber
&H\left[\mathbf{y}(l)|S_1,{\cal D}_{-u_d}\right]\\\nonumber=&\frac{1}{{n-r \choose d-1}(n-r-d+1)} g_l(g_l-1){n-r-g_l \choose d-2} h\\ \label{eqn-LB_sub_gopt}
<&  \frac{1}{{n-r \choose d-1}(n-r-d+1)} g^2_l {n-r-g_l \choose d-2} h.
\end{align}The term dependent on $g_l$ is re-written as \[g^2_l {n-r-g_l \choose d-2}=\frac{1}{(d-2)!}f(g_l),\] where \[f(g_l)\triangleq g^2_l \prod_{j=0}^{d-3} (n-r-g_l-j). \]We now maximize the above term with respect to $g_l\in [2,n-r-d+2]$ to obtain a lower bound on the number of tests. The following lemma gives the approximate optimum value of $g_l$ (denoted by $g_{opt}$). It is shown that $f(g_l) > f(g_l+\epsilon)$ for all $g_l \geq g_{opt}$ and $0<\epsilon\leq 1$, and  $f(g_l-\epsilon) < f(g_l)$ for all $g_l \leq g_{opt}$. Since $g_{opt}$ is independent of $l$, hereupon the subscript $l$ is dropped. It must be noted that $(g+\epsilon),(g-\epsilon)\in [2,n-r-d+2]$.
\begin{lemma}
There exists $n_0$ so that for all $n \geq n_0$, the optimum value of $g$ that maximizes $f(g)$ is given by $g_{opt}\triangleq \frac{2n}{d}+k$, where $k=o\left(\frac{2n}{d}\right)$.
\end{lemma}
\begin{proof}
To ensure $f(g) > f(g+\epsilon)$ for all $g\geq \frac{2n}{d-2}\triangleq g_1$ and $0<\epsilon \leq 1$, it suffices that
\begin{align}\nonumber
&g^2 \prod_{j=0}^{d-3} (n-r-g-j) > (g+\epsilon)^2 \prod_{j=0}^{d-3} (n-r-g-\epsilon-j)\\\nonumber
\Leftrightarrow &  \prod_{j=0}^{d-3} \left(1+ \frac{\epsilon}{n-r-g-\epsilon-j}\right) > \left(1+\frac{\epsilon}{g}\right)^2\\\nonumber
\Leftarrow &   \left(1+ \frac{\epsilon}{n-r-g-\epsilon}\right)^{d-2} > \left(1+\frac{\epsilon}{g}\right)^2\\\nonumber
\Leftarrow &   \left(1+ \frac{\epsilon}{n-r-g-\epsilon}\right)^{d-2} > e^{\frac{2\epsilon}{g}}\\
\label{eqn-ineq_gopt}
\Leftrightarrow &   \frac{g(d-2)}{2\epsilon}\ln \left(1+ \frac{\epsilon}{n-r-g-\epsilon}\right)> 1
\end{align}Since the above function is increasing in $g$, it is sufficient to prove that the above inequality is satisfied for $g=g_1$. Note that, since $r=o(n)$ and $d \underset{n\rightarrow \infty}{\longrightarrow} \infty$, we have $n-r-g_{1}=\Omega (n)$. So, in order to satisfy (\ref{eqn-ineq_gopt}) for all $n \geq n_0$ and some finite positive integer $n_0$ at $g=g_1$,  it suffices that
\begin{align*}
\frac{1}{1-\frac{2}{d-2}-\frac{2}{n}} > 1,
\end{align*}which is true. The above inequality follows by using the approximation $\ln (1+x)\sim x$ in (\ref{eqn-ineq_gopt}), for $x<<1$.

To ensure $f(g-\epsilon) < f(g)$ for all $g\leq \frac{2n}{d} -4 \triangleq g_{2}$, it is required that
\begin{align}
&(g-\epsilon)^2 \prod_{j=0}^{d-3} (n-r-g+\epsilon-j) < g^2 \prod_{j=0}^{d-3} (n-r-g-j)\\\nonumber
\Leftrightarrow &  \prod_{j=0}^{d-3} \left(1+ \frac{\epsilon}{n-r-g-j}\right) < \left(1+\frac{\epsilon}{g-\epsilon}\right)^2\\\nonumber
\Leftarrow &   \left(1+ \frac{\epsilon}{n-r-g-d+3}\right)^{d-2} < \left(1+\frac{\epsilon}{g-\epsilon}\right)^2\\\nonumber
\Leftarrow &  \exp \left( \frac{(d-2)\epsilon}{n-r-g-d+3}\right) < \left(1+\frac{\epsilon}{g-\epsilon}\right)^2\\
\label{eqn-ineq_gopt1}
\Leftrightarrow &   \frac{2(n-r-g-d+3)\ln \left(1+\frac{\epsilon}{g-\epsilon}\right)}{(d-2)\epsilon}> 1.
\end{align}Since the above function is decreasing in $g$, it is sufficient to prove that the above inequality is satisfied for $g=g_2$. In order to satisfy (\ref{eqn-ineq_gopt1}) for all $n \geq n_0$ and some finite positive integer $n_0$ at $g=g_2$,  it suffices that
\begin{align} \label{eqn-ineq_gopt2}
& \frac{2(n-r-g_2-d+3)\epsilon}{\left(g_{2}-{\epsilon}\right)(d-2)\epsilon} >1\\ \nonumber
\Leftrightarrow & \frac{(1-\frac{r}{n}-\frac{2}{d}-\frac{d}{n}+\frac{7}{n})}{\left(1-\frac{d(4+\epsilon)}{2n}\right)\left(1-\frac{2}{d}\right)} >1\\\nonumber
\Leftrightarrow & \frac{(1-\frac{r}{n}-\frac{2}{d}-\frac{d}{n}+\frac{7}{n})}{1-\frac{2}{d}-\frac{2d+0.5\epsilon}{n}+\frac{4+\epsilon}{n}} >1,
\end{align}which is true because $r <d$\footnote{Recall that this was assumed at the beginning of the proof of the third lower bound.}. The inequality (\ref{eqn-ineq_gopt2}) is obtained by using the approximation $\ln (1+x)\sim x$ in (\ref{eqn-ineq_gopt1}), for $x<<1$.

Therefore, we have $g_{opt} \in \left[\frac{2n}{d}-4,\frac{2n}{d-2}\right]$, and so $g_{opt}=\frac{2n}{d}+k$, for $k=o\left(\frac{2n}{d}\right)$.
\end{proof}

From (\ref{eqn-LB_sub_gopt}) and (\ref{eqn-lb_no_tests}), using the approximation $h \approx \frac{1}{d} \log d$, an asymptotic lower bound on the number of tests for vanishing error probability is given by
\begin{align}\label{eqn-scaling}
T_{NA} \geq \Omega\left(d\frac{{n-r \choose d-1}(n-r-d+1)}{ g^2_{opt} {n-r-g_{opt} \choose d-2}}\right).
\end{align}We now show that the fractional term above scales as $d$.

\begin{align}\nonumber
&\frac{{n-r \choose d-1}(n-r-d+1)}{ g^2_{opt} {n-r-g_{opt} \choose d-2}}\approx \frac{{n-r \choose d-1}(n-r-d+1)}{ \frac{4n^2}{d^2} {n-r-g_{opt} \choose d-2}}\\
\nonumber
= &\frac{{\prod_{i=0}^{d-2}(n-r-i)}(n-r-d+1)}{ \frac{4n^2}{d^2} (d-1) {\prod_{i=0}^{d-3}(n-r-g_{opt}-i)}}\\
\label{eqn-LB_approx1}
\approx  &\frac{d}{4} \prod_{i=0}^{d-3}\frac{n-r-i}{n-r-g_{opt}-i}\\
\nonumber
=& \frac{d}{4} \prod_{i=0}^{d-3}\left(1+\frac{g_{opt}}{ n-r-g_{opt}-i}\right) \\
\label{eqn-LB_approx2}
\geq & \frac{d}{4} \left(1+\frac{g_{opt}}{ n-r-g_{opt}}\right)^{d-2} \approx  \frac{d}{4}e^{\frac{g_{opt}(d-2)}{ n-r-g_{opt}}}  \approx  d\frac{e^2}{4}.
\end{align}
It must be noted that the ratio notion of approximation does not affect the scaling of the number of tests. The approximations in (\ref{eqn-LB_approx1}) and (\ref{eqn-LB_approx2}) make use of the fact that $r,d=o(n)$ and $g_{opt}=\frac{2n}{d}+o\left(\frac{2n}{d}\right)$.  Therefore, from (\ref{eqn-scaling}) and (\ref{eqn-LB_approx2}), we have $T_{NA}=\Omega(d^2)$.
\end{proof}

The lower bounds for the IDG-WSI model are obtained in the following theorem. Since we are interested in asymptotic lower bounds, we assume that the limits $\lt \frac{I_{max}}{r}$ and $\lt \frac{r}{dI_{max}}$ exist, and $I_{max} \underset{n \rightarrow \infty}{\longrightarrow} \infty$. The ideas used to obtain the following theorem are similar to those used in Theorem \ref{thm-LB_GTI_ID_NSI}. However, the ``second constraint'' (mentioned in the proof of the following theorem) needs to be accounted for.

\begin{theorem}\label{thm-LB_GTI_ID_WSI}
An asymptotic lower bound on the number of tests required for non-adaptive pooling designs for solving the IDG-WSI problem with vanishing error probability for $r,d=o(n)$ is given by 
\begin{align*}
\max\left\{\Omega\left((r+d)\log n+I_{max}d\right), \Omega\left(\frac{I^2_{max}}{\log I_{max}} \log n\right)\right\}. 
\end{align*} An additional asymptotic lower bound is given by $\Omega(d^2)$ when either $1)$ $r= (c-1)d + kd$, for some constant $0<k<1$ and $I_{max}=c$ or $2)$ $r= (c-1)d + k$ and $(c-1)d< r \leq cd$, for positive integer $k=o(d)$ and $I_{max}=c$ or $3)$ $(c-1)d< r \leq cd$ and $I_{max} \geq c+1$.
\end{theorem}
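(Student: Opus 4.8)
The plan is to mirror the three-part structure of the proof of Theorem~\ref{thm-LB_GTI_ID_NSI}, modifying each bound so as to respect the \emph{second constraint}, namely that in the IDG-WSI model every defective is inhibited by at most $I_{max}$ inhibitors. For the first bound I would again write $T_{NA} \geq H({\cal I},{\cal D},{\cal E}({\cal I},{\cal D})) = H({\cal D}) + H({\cal I}|{\cal D}) + H({\cal E}({\cal I},{\cal D})|{\cal I},{\cal D})$, where the first two terms contribute $\log\!\left(\binom{n}{d}\binom{n-d}{r}\right) = \Omega((r+d)\log n)$ exactly as in (\ref{eqn-LB_ent})--(\ref{eqn-LB_ent_2}). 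Only the association entropy changes: fixing a designated set of $I_{max}$ inhibitors and letting each defective select an \emph{arbitrary} subset of them automatically respects the per-defective cap and produces at least $2^{I_{max}}$ patterns per defective, i.e. $2^{\Omega(I_{max}d)}$ feasible patterns overall (restricting to the constant fraction in which every designated inhibitor is covered, and pinning the remaining $r-I_{max}$ inhibitors by a fixed feasible assignment). Hence $H({\cal E}|{\cal I},{\cal D}) = \Omega(I_{max}d)$ and the first bound $\Omega((r+d)\log n + I_{max}d)$ follows, valid for adaptive designs too.

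For the second bound I would specialize to the worst case in which one defective $u_1$ is inhibited by the full complement of $I_{max}$ inhibitors while the other $r-I_{max}$ inhibitors are distributed feasibly (possible since $(d-1)I_{max}\geq r-I_{max}$) among $\{u_2,\dots,u_d\}$. Letting a genie reveal $\{u_2,\dots,u_d\}$ and those $r-I_{max}$ inhibitors with their associations, and noting as in Theorem~\ref{thm-LB_GTI_ID_NSI} that a revealed defective in a pool forces a positive outcome while a revealed inhibitor only lowers outcome entropy, I may restrict to pools avoiding all revealed items. The residual problem is then identification of the $I_{max}$ inhibitors of the single defective $u_1$ amongst $\Theta(n)$ normal items in the $1$-inhibitor model, so Theorem~\ref{thm-LB_1_Inh_Model} with $(r,d)\mapsto(I_{max},1)$ and $I_{max}\to\infty$ gives $\Omega\!\left(\frac{I_{max}^2}{\log I_{max}}\log n\right)$.

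The third bound is where the second constraint really bites, and I expect it to be the main obstacle. I would build the analogue of Fig.~\ref{fig-LB_d^2}: the inhibitor $S_1$ is associated with exactly one member of a set of $m$ ``free'' defectives (its identity being the residual uncertainty), while the remaining $r-1$ inhibitors are packed onto disjoint ``holder'' defectives filled up to capacity $I_{max}$, leaving the free defectives carrying no inhibitor other than possibly $S_1$. Packing to capacity uses about $(r-1)/I_{max}$ holders, and the role of hypotheses $1$--$3$ is exactly to guarantee $(r-1)/I_{max}\leq (1-\epsilon)d$, so that $m = d-\lceil (r-1)/I_{max}\rceil = \Theta(d)$ free defectives remain. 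Because the free defectives carry only $S_1$ within pools that exclude all revealed items, the masking effect is identical to the IDG-NSI case: a pool holding two free defectives is positive irrespective of which one $S_1$ inhibits, hence uninformative, and only pools holding a single free defective together with $S_1$ carry entropy. I would then rerun the Fano computation (\ref{eqn-sub1})--(\ref{eqn-lb_no_tests}) with $d-1$ replaced by $m$, re-optimize the pool size as in the lemma used for the IDG-NSI third bound to get $g_{opt}\approx 2n/m$, and conclude $T_{NA}=\Omega(m^2)=\Omega(d^2)$.

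The delicate points, and the part I expect to consume most of the effort, are all in this third bound: producing one concrete association pattern that simultaneously respects the cap $I_{max}$, reduces the residual uncertainty to a clean $\log m$ over $m=\Theta(d)$ candidate free defectives, and keeps those free defectives innocent of all revealed inhibitors so that the masking argument is preserved; and then verifying case-by-case that the arithmetic of hypotheses $1$--$3$ indeed yields $m=\Theta(d)$ and that the pool-size optimization and the $\Omega(m^2)$ scaling go through unchanged in the presence of the extra revealed holders and inhibitors. The first two bounds, by contrast, are routine adaptations of the IDG-NSI arguments.
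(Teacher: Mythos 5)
Your second bound matches the paper's argument, but both the third and (to a lesser extent) the first bound have real problems. The serious one is the third bound. You propose to pack the $r-1$ non-floating inhibitors onto ``holder'' defectives filled to capacity $I_{max}$ so as to leave $m=d-\lceil (r-1)/I_{max}\rceil$ defectives carrying no inhibitor at all, and you need $m=\Theta(d)$. Under hypotheses $1$ and $2$ the cap is $I_{max}=c$ while $r>(c-1)d$, so \emph{no} feasible pattern — not just your packing — can leave more than $d-(r-1)/c\leq d/c$ defectives inhibitor-free; since the theorem is stated under the standing assumption $I_{max}\rightarrow\infty$, this is $o(d)$ and your route yields only $\Omega\bigl((d/c)^2\bigr)$, not $\Omega(d^2)$ (the same arithmetic bites hypothesis $3$ whenever $I_{max}=c+O(1)$). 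The paper's construction differs in exactly the respect you flagged as delicate: every one of the $d(1-k)+1=\Theta(d)$ candidate defectives ${\cal D}_{S_r}$ carries a base load of $c-1$ inhibitors (the revealed $\{I_{kd},\cdots,I_d\}$, whose assignment to the unknown ${\cal D}_{S_r}$ is itself part of the residual message) plus one unit of headroom for $S_r$. The price is that the candidate defectives are \emph{not} innocent — a pool containing one candidate defective and $S_r$ can be negative because of some $I_j$ — and the masking argument does not carry over verbatim. The missing idea is the paper's factorization $\mathbf{y}=\mathbf{y}_1\cdot\mathbf{y}_2$ into virtual outcome vectors (obtained by deleting $\{I_{kd},\cdots,I_d\}$ and by deleting $S_r$, respectively), together with the independence of ${\cal E}(S_r,{\cal D}_{S_r})$ and ${\cal E}(\{I_{kd},\cdots,I_d\},{\cal D}_{S_r})$ and a data-processing inequality, which lets the Fano computation be run on $\mathbf{y}_1$ alone; only then does the IDG-NSI pool-size optimization apply with $d-1$ replaced by $d(1-k)+1$.

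There is also a gap in your first bound in the tight regime $c\sim I_{max}$ (equivalently $r$ close to $I_{max}d$). After pinning a feasible assignment of the remaining $r-I_{max}>(c-1)d-I_{max}$ inhibitors, the total residual capacity across all defectives is at most $I_{max}d-(r-I_{max})=O(I_{max})$, so the defectives cannot each select an arbitrary subset of the designated $I_{max}$ inhibitors and the count of feasible patterns is nowhere near $2^{\Omega(I_{max}d)}$ by this route. The paper handles this by splitting into three cases on $\lt \frac{c}{I_{max}}$ and $\lt\frac{I_{max}}{r}$; in the tight case it counts the number of ways to partition the inhibitors into disjoint blocks of size $c-1$ assigned to the defectives, and extracts $\Omega(cd)=\Omega(I_{max}d)$ bits via Stirling's bounds. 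Your construction is fine when $c=o(I_{max})$, but as written it does not cover the full range of the theorem.
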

\begin{proof}
The first lower bound is obtained by lower bounding $H\left({\cal E}({\cal I},{\cal D})|({\cal I},{\cal D})\right)$ in (\ref{eqn-LB_ent}) as follows. Two constraints need to be satisfied while counting the entropy of association pattern. 
\begin{itemize}
 \item {\it First constraint:} Minimum degree of a vertex in ${\cal I}$ is one.
 \item {\it Second constraint:} Maximum degree of a vertex in ${\cal D}$ is no more than $I_{max}$.
\end{itemize}
We now consider the three possible cases below and show that in each of the cases the lower bound on the number of association patterns scales exponentially in $I_{max}d$. Let \mbox{$(c-1)d<r\leq cd$}, for some positive integer $c$, and so $I_{max}\geq c$. Define $\alpha_1=\lt \frac{c}{I_{max}}$ and $\alpha_2=\lt \frac{I_{max}}{r}$.

{\it Case $1$:} $\alpha_1 <1$ and $\alpha_2 < 1$. There exist positive constants $\beta_1<1$ and $\beta_2<1$ so that $c \leq \beta_1 I_{max}$ and $I_{max} \leq \beta_2 r$, $\forall n \geq n_0$. Define an association pattern, where each defective starting from $u_1$ is assigned a disjoint set of $c$ inhibitors until every inhibitor is covered. Therefore we have, $\underset{u_i \in {\cal D}}\max ~|{\cal I}(u_i)| \leq c$. Since the first constraint is satisfied, each defective is now free to choose an association pattern so that $\underset{u_i \in {\cal D}}\max ~|{\cal I}(u_i)| \leq I_{max}$. The number of such possible association patterns can be lower bounded by
\begin{align*}
 {r \choose I_{max}-c}^d \geq \left(\frac{r}{I_{max}-c}\right)^{(I_{max}-c)d}.
\end{align*}Thus, the entropy of association pattern  in this case scales (asymptotically) as the logarithm of the above quantity, which is given by $\Omega(I_{max}d)$.

{\it Case $2$:} $\alpha_1 <1$ and $\alpha_2 = 1$.  There exist positive constants $\beta_1<1$ and $\beta_2\leq 1$ with $\beta_2>\beta_1$ so that $c \leq \beta_1 I_{max}$ and $I_{max} \geq \beta_2 r$, $\forall n \geq n_0$. So, we have $I_{max}-c \geq (\beta_2 -\beta_1) r$, $\forall n \geq n_0$. Using similar arguments as in Case $1$, where after satisfying the first constraint, $\beta_2 r - c$ inhibitors are chosen to associate with each defective, we now have that the entropy of association pattern in this case scales asymptotically as $\Omega(rd)=\Omega({I_{max}d})$.

{\it Case $3$:}  $\alpha_1 =1$. Note that this case constitutes a large inhibitor regime with respect to the number of defectives (because $I_{max}\rightarrow \infty$). There exists a positive constant $\beta_1 \leq 1$ so that $c \geq \beta_1 I_{max}$, $\forall n \geq n_0$. The number of ways of assigning each defective to a disjoint set of $(c-1)$ inhibitors is given by

{\small\begin{align*}
 &{r \choose c-1}{r-(c-1) \choose (c-1)}{r-2(c-1) \choose (c-1)}\cdots {r-(d-1)(c-1) \choose (c-1)}\\
 &=\frac{r!}{((c-1)!)^d (r-d(c-1))!}\\
 &\underset{(a)}{\geq} \frac{\sqrt{2\pi}r^{r+\frac{1}{2}}e^{-r}}{e^2 (c-1)^{d(c-1+\frac{1}{2})}e^{-d(c-1)}(r-d(c-1))^{r-d(c-1)+\frac{1}{2}} e^{-(r-d(c-1))}}\\
 &\underset{(b)}{\geq} \frac{\sqrt{2\pi}(d(c-1))^{d(c-1)+\frac{1}{2}}e^{-dc}}{e^2 (c-1)^{d(c-1+\frac{1}{2})}e^{-d(c-1)}(r-d(c-1))^{r-d(c-1)+\frac{1}{2}} e^{-(r-d(c-1))}}\\
 &\underset{(c)}{\geq} \frac{\sqrt{2\pi}(d(c-1))^{d(c-1)+\frac{1}{2}}e^{-dc}}{e^2 (c-1)^{d(c-1+\frac{1}{2})}e^{-d(c-1)}d^{d+\frac{1}{2}}}\\
 &=\frac{\sqrt{2\pi}(c-1)^{\frac{1}{2}}d^{d(c-2)}}{e^2(c-1)^{\frac{d}{2}}e^d}=\frac{\sqrt{2\pi}(c-1)^{\frac{1}{2}}d^{d(c-2)}}{e^2 d{\frac{d\log_d(c-1)}{2}}d^{d\log_d e}}\\
 &=\frac{\sqrt{2\pi}}{e^2}(c-1)^{\frac{1}{2}}d^{d\left(c-\frac{\log_d(c-1)}{2}-\log_d e-2\right)},
\end{align*}}where $(a)$ follows from Stirling's lower and upper bounds for factorial functions, $(b)$ and $(c)$ follow from the fact that $d(c-1) < r \leq cd$. Observe that the remaining $r-d(c-1)$ inhibitors can be assigned one each to one defective without violating the second constraint. Thus, the entropy of association pattern in this case scales asymptotically as $\Omega(cd)=\Omega(\beta_1 I_{max}d)=\Omega({I_{max}d})$.


\begin{figure}[htbp]  
\centering
\includegraphics[totalheight=2.5in,width=3.4in]{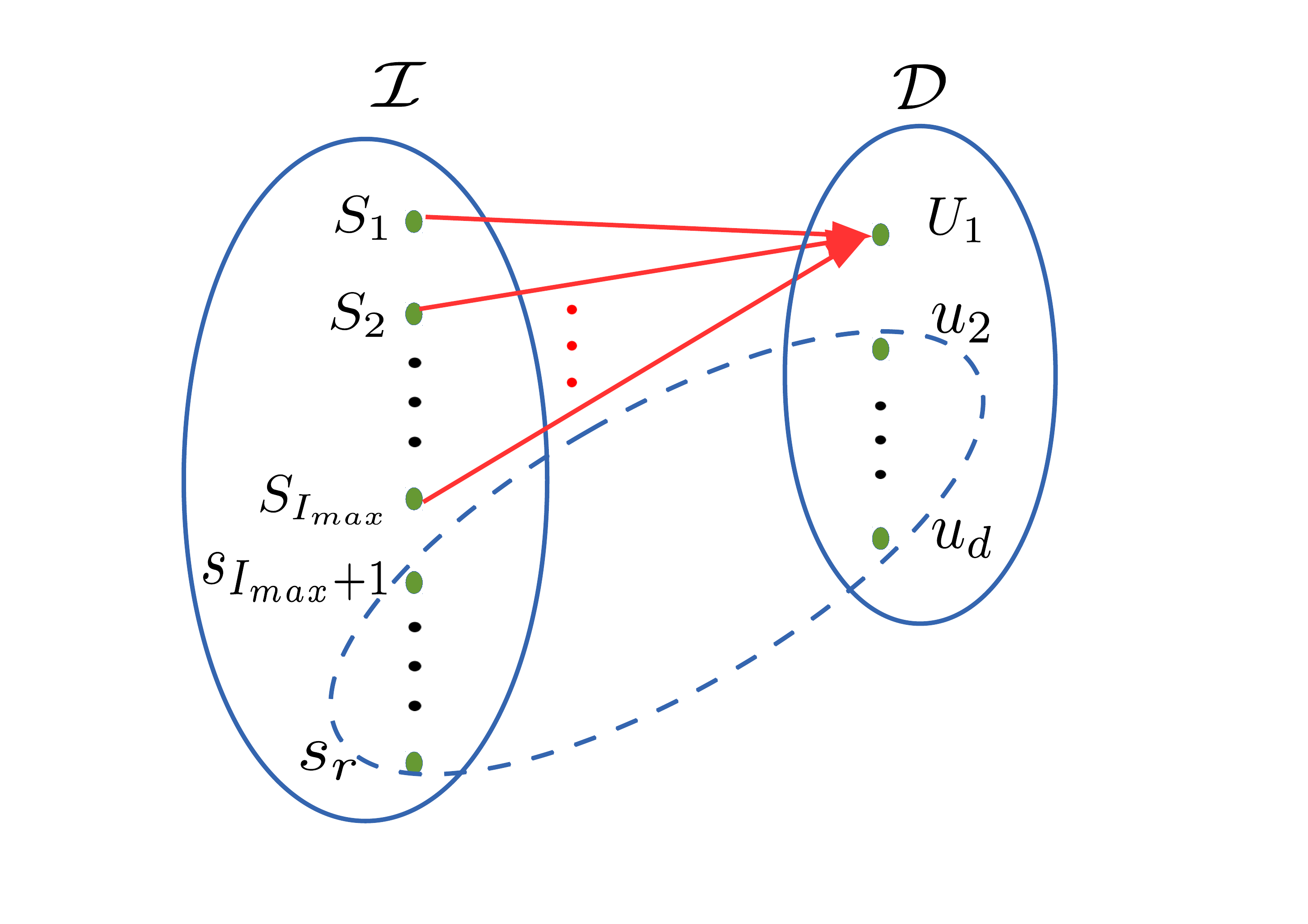}
\caption{A possible association pattern where, without loss of generality, $u_1$ is assumed to be a defective for which $|{\cal I}(u_1)|=I_{max}$. The set of inhibitors and defectives that are associated only among themselves (which the genie reveals) are inside the dotted ellipse.}
\label{fig-ass_LB_WSI}
\end{figure}
The second lower bound is obtained as shown below. There could exist at least one defective $u_1 \in {\cal D}$ so that $|{\cal I}(u_1)|=I_{max}$. Consider an association pattern where ${\cal I}(u_1) \cap {\cal I}(u_k) = \{\emptyset\}$, for $u_k \in {\cal D}, k\neq 1$, as depicted in Fig. \ref{fig-ass_LB_WSI}. Now, we use a similar argument as in the proof of the second lower bound in Theorem \ref{thm-LB_GTI_ID_NSI}. Let a genie reveal the inhibitor subset ${\cal I}-{\cal I}(u_1)$, the defective subset ${\cal D}-u_1$ and  their associations. Now, none of the items from the sets ${\cal I}-{\cal I}(u_1)$ and ${\cal D}-u_1$ is useful in distinguishing the inhibitors in the set ${\cal I}(u_1)$ from the unknown defective and the normal items. This is because the entropy of an outcome is zero if the test contains some defective from  ${\cal D}-u_1$ but none of its associated inhibitors (which are only from the set ${\cal I}-{\cal I}(u_1)$) as such a test outcome is always positive. The entropy of an outcome does not change if any of the inhibitors ${\cal I}-{\cal I}(u_1)$ with or without its associated defectives (which are only from the set ${\cal D}-u_1$) is present in the test. Thus, the problem is now reduced to the $1$-inhibitor problem of finding $I_{max}$ inhibitors amidst $n-(r-I_{max})-(d-1)$ normal items and one (unknown) defective. A lower bound on the number of non-adaptive tests for this problem is clearly a lower bound on the number of tests for the original problem of determining the association graph for the IDG-WSI model. Since $r,d=o(n)$ and $I_{max} \underset{n \rightarrow \infty}{\longrightarrow} \infty$, using Theorem \ref{thm-LB_1_Inh_Model}, we get the lower bound $\Omega\left(\frac{I^2_{max}}{\log I_{max}} \log n\right)$.

The third lower bound is obtained below for the case where $r= (c-1)d + kd$, for some constant $0<k<1$ and $I_{max}=c$. The proof for the other two cases mentioned in the statement of the theorem are similar. Parts of this proof are similar to the proof of the third lower bound in Theorem \ref{thm-LB_GTI_ID_NSI}, and hence we only point out the differences in this proof. As in the proof of Theorem \ref{thm-LB_GTI_ID_NSI}, we consider a reduced problem as follows. As depicted in Fig. \ref{fig-LB_WSI_d_sq}, a specific class of association graph is considered, where disjoint sets of $c-1$ inhibitors $\{{\cal I}_1, {\cal I}_2, \cdots, {\cal I}_d\}$ are associated with one defective each, i.e., each item in the set ${\cal I}_i$ with $|{\cal I}_i|=c-1$ is associated with the defective $U_i$, for $i=1,\cdots,d$. Each item in the set of inhibitors $ {\cal I}_{d+1}\triangleq\{S_{(c-1)d+1},\cdots,S_{(c-1)d+kd-1}\}$ is associated with one distinct defective with which the sets of inhibitors $\{{\cal I}_1,\cdots, {\cal I}_{kd-1}\}$ are also associated, i.e.,  $S_{(c-1)d+j}$ is associated with the defective $U_j$, for $j=1,\cdots,kd-1$. The remaining inhibitor $S_r$ is associated with exactly one of the defectives in the set ${\cal D}_{S_r}\triangleq\{U_{kd},\cdots,U_d\}$. It is now easily seen that the first constraint is satisfied, and $|{\cal I}(U_j)|\leq c$ for all $j$, which means that the second constraint is also satisfied. Now, let a genie reveal the realizations of $\{{\cal I}_1, {\cal I}_2, \cdots, {\cal I}_{d+1}\}$ and  $\{U_1,\cdots,U_{kd-1}\}$, given by  $\mathscr{I}\triangleq\{I_1, I_2, \cdots, I_{d+1}\}$ and $\overline{\mathscr{D}}_{S_r}\triangleq\{u_1,\cdots,u_{kd-1}\}$ respectively. The association pattern between them given by ${\cal E}(\mathscr{I},\overline{\mathscr{D}}_{S_r})$ is also revealed. 
\begin{figure*}
\includegraphics[totalheight=5.3in,width=6.5in]{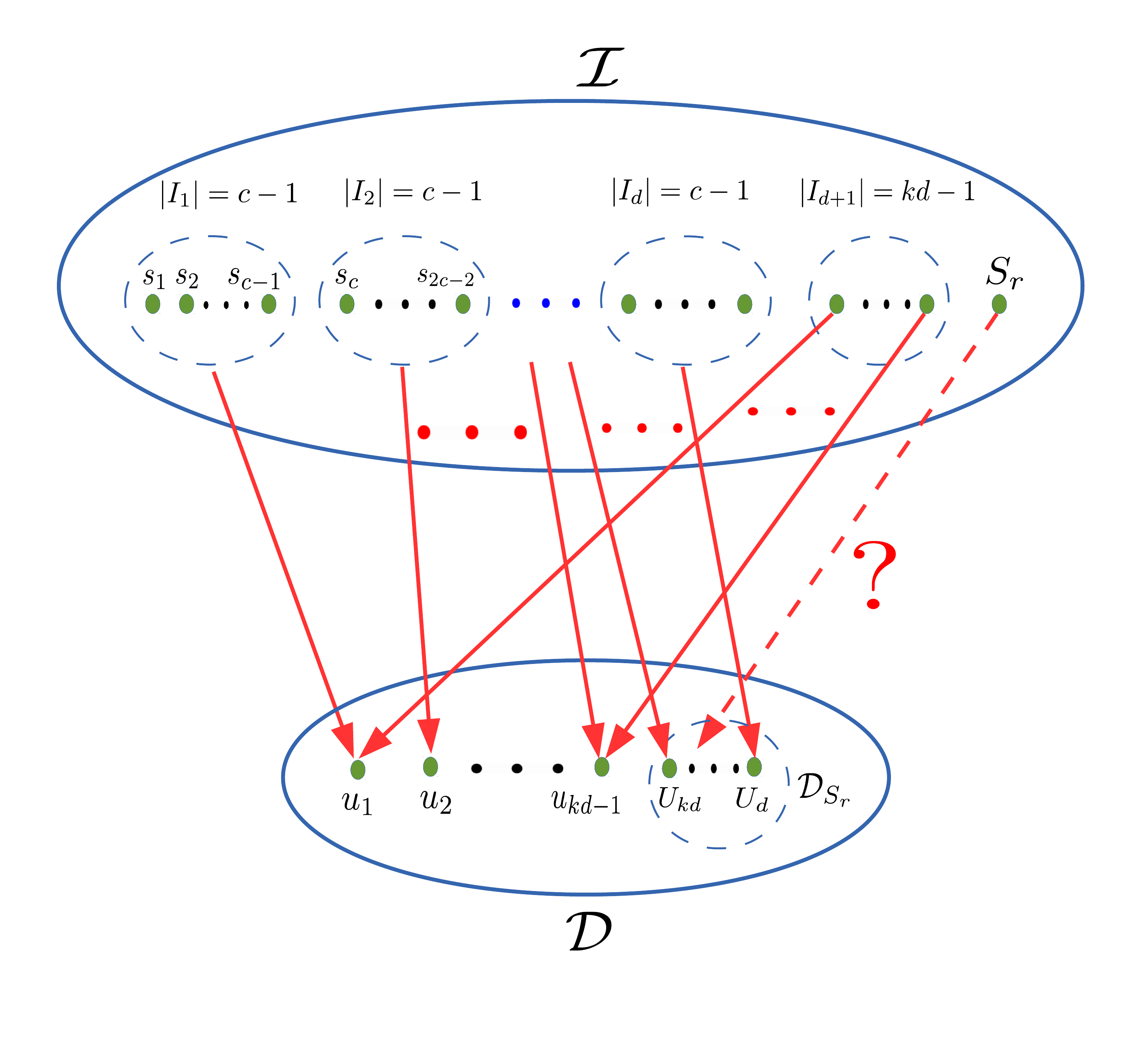}
\caption{Association graph with realizations $\{I_1,\cdots, I_d,I_{d+1}, u_1,\cdots,u_{kd-1}\}$ considered for obtaining the third lower bound for the IDG-WSI model. The genie reveals the realizations $\{I_1,\cdots, I_{kd-1},I_{d+1}\}$ along with their association pattern with the realizations $\{u_1,\cdots,u_{kd-1}\}$. It also reveals the realizations $\{I_{kd},\cdots,I_{d}\}$ which are known to be associated with the unknown realization of the remaining defectives ${\cal D}_{S_r}$. It is also known that the unknown inhibitor $S_r$ is associated with exactly one of the defectives in the set ${\cal D}_{S_r}$. Such an association graph is chosen so that the constraint $I_{max}=c$ is not violated.}
\label{fig-LB_WSI_d_sq}
\end{figure*}

The ``residual message'' in the system is now given by $W_1\triangleq \left\{S_r,{\cal D}_{S_r},{\cal E}(S_r,{\cal D}_{S_r}), {\cal E}(\{I_{kd},\cdots,I_d\},{\cal D}_{S_r})\right\}$. We now show that determining $W_1\triangleq{\cal E}(S_r,{\cal D}_{S_r})$ itself requires order of $d^2$ tests. It is easy to see that there is no reduction in the mutual information $I\left[W_1;\mathbf{y}|S_r,{\cal D}_{S_r},\mathscr{I},\overline{\mathscr{D}}_{S_r},{\cal E}(\mathscr{I},\overline{\mathscr{D}}_{S_r})\right]$ if the items in the set $\{I_1,\cdots,I_{kd-1},I_{d+1},\overline{\mathscr{D}}_{S_r}\}$ do not participate in any of the tests. So, we assume hereon that these items do not participate in any of the tests, and thus denote the rest of the items which participate in the tests by ${\cal W}'\triangleq {\cal N}\bigcup \{I_{kd},\cdots,I_d, S_r\} \bigcup {\cal D}_{S_r}$.

For the reduced problem considered, we have
\begin{align*}
&\underset{W_1}{\max}  ~\Pr\left\{\hat{W}_1\neq W_1\right\}\geq ~\mathbb{E}_{f}\left[\Pr\left\{\hat{W}_1\neq W_1\right\}\right]\triangleq P_{e_{avg}}.
\end{align*}where $f$ denotes some probability mass function of the residual association graph. Let $f_1$ and $f_2$ denote independent probability mass functions of the residual association patterns ${\cal E}(s_r,\mathscr{D}_{s_r})$ and ${\cal E}(\{I_{kd},\cdots,I_d\},\mathscr{D}_{s_r})$, for any realization of $(S_r,{\cal D}_{S_r})$ given by $(s_r,\mathscr{D}_{s_r})$. The function $f_2$ is such that 
\begin{align}\label{eqn-f_WSI}
\Pr\{(s_r,u)\in {\cal E}(s_r,\mathscr{D}_{s_r})\}=\frac{1}{d(1-k)+1}, \forall u \in \mathscr{D}_{s_r},
\end{align}for any realization $(s_r,\mathscr{D}_{s_r})$. Also, it is assumed $f$ is such that the realizations of $(S_r,{\cal D}_{S_r})$ are uniformly distributed across the rest of the items, i.e., occurrence of every realization happens with probability $\frac{1}{{n-r-kd+2 \choose d(1-k)+1}(n-r-d+1)}$.

Let $\mathbf{M}$ be the test matrix which is known a priori. Also, let the matrix $\mathbf{M}_1$ denote the test matrix $\mathbf{M}$ whose columns are  restricted to the items ${\cal W}'\backslash \{I_{kd},\cdots,I_d\}$, and the matrix $\mathbf{M}_2$ denotes the test matrix $\mathbf{M}$ whose columns are restricted to the items ${\cal W}'\backslash \{s_r\}$. Denote the ``virtual outcome vector'' obtained by testing the items using the matrices $\mathbf{M}_1$ and $\mathbf{M}_2$ by $\mathbf{y}_1\left({\cal E}(s_r,\mathscr{D}_{s_r})\right)$ and $\mathbf{y}_2\left({\cal E}(\{I_{kd},\cdots,I_d\},\mathscr{D}_{s_r})\right)$ respectively\footnote{The arguments of the virtual outcome vectors denote that the vectors are functions of their arguments.}. Note that $\mathbf{y}=\mathbf{y}_1.\mathbf{y}_2$, i.e., the actual outcome vector is equal to component-wise Boolean AND of the two virtual outcome vectors for every realization $(s_r,\mathscr{D}_{s_r})$. Since ${\cal E}(s_r,\mathscr{D}_{s_r})$ and ${\cal E}(\{I_{kd},\cdots,I_d\},\mathscr{D}_{s_r})$ are statistically independent messages, using data-processing inequality, we have 
\begin{align}\label{eqn-DP}
&I\left[{\cal E}(s_r,{\mathscr D}_{S_r});\mathbf{y}|s_r,{\mathscr D}_{s_r},\{I_{kd},\cdots,I_d\}\right]  \\\nonumber\leq ~& I\left[{\cal E}(s_r,{\mathscr D}_{S_r});\mathbf{y}_1|s_r,{\mathscr D}_{s_r}\right].
\end{align}
Now, applying Fano's inequality, we have
\begin{align}\nonumber
&H[{\cal E}(S_r,{\cal D}_{S_r})|S_r,{\cal D}_{S_r},\{I_{kd},\cdots,I_d\}]\\ \nonumber&= \frac{1}{{n-r-kd+2 \choose d(1-k)+1}(n-r-d+1)}\sum_{S_r,{\cal D}_{{\cal S}_r}}  \log (d(1-k)+1)\\
\nonumber
&\leq  1 + P_e H[{\cal E}(S_r,{\cal D}_{S_r})|S_r,{\cal D}_{S_r},\{I_{kd},\cdots,I_d\}] \\\nonumber&~~~~~~+ I\left[{\cal E}(S_r,{\cal D}_{S_r});\mathbf{y}|S_r,{\cal D}_{S_r},\{I_{kd},\cdots,I_d\}\right]
\end{align}
\begin{align}
\nonumber
&\leq  1 + P_e \log (d(1-k)+1)+\frac{1}{{n-r-kd+2 \choose d(1-k)+1}(n-r-d+1)}\times \\\label{eqn-MC}&~~~~~~~~\sum_{S_r,{\cal D}_{S_r}} I\left[{\cal E}(S_r,{\cal D}_{S_r});\mathbf{y}_1|\{S_r,{\cal D}_{S_r}\}=\{s_r,{\mathscr D}_{s_r}\}\right]\\
\nonumber
&\leq  1 + P_e \log (d(1-k)+1) +\frac{1}{{n-r-kd+2 \choose d(1-k)+1}(n-r-d+1)}\times \\\nonumber&~~~~~~~~~\sum_{S_r,{\cal D}_{S_r}} H\left[\mathbf{y}_1|\{S_r,{\cal D}_{S_r}\}=\{s_r,{\mathscr D}_{s_r}\}\right],
\end{align}where $P_e=\Pr \{\hat{{\cal E}}(S_r,{\cal D}_{S_r})\neq {\cal E}(S_r,{\cal D}_{S_r})\}\leq P_{e_{avg}}$ and  (\ref{eqn-MC}) follows from (\ref{eqn-DP}). Now, following similar steps after (\ref{eqn-cond_red_H}) in the proof of Theorem \ref{thm-LB_GTI_ID_NSI}, we have the lower bound of $\Omega((d(1-k)+1)^2)=\Omega(d^2)$ tests.
\end{proof}

Thus, in the $d=O(I_{max})$ and $d=O(r)$ regimes, the upper bound on the number of tests for the proposed non-adaptive pooling design is away from the proposed (second) lower bound for the IDG-WSI and IDG-NSI models by $\log I_{max}$ and $\log r$ multiplicative factors respectively. In the $I_{max}=o(d)$ and $r=o(d)$ regimes, the upper bounds exceed the proposed (third) lower bounds by $\log n$ multiplicative factors for both the IDG models, with some restrictions on $I_{max}$ or $r$ in IDG-WSI model. When these restrictions are removed, the evaluation of the lower bound might require consideration of other association graphs like in Fig. \ref{fig-Eg1}, as an extension of the association graph used in proof of the third lower bound in Theorem \ref{thm-LB_GTI_ID_WSI}. But even for the graph in Fig. \ref{fig-Eg1}, the optimization of the entropy over the pool size becomes combinatorially cumbersome. We thus relegate the evaluation of lower bound for the unconstrained IDG-WSI model to future work. For the proposed two-stage adaptive pooling design, the upper bound on the number of tests is away from the proposed (first) lower bound by $\log n$ multiplicative factors for both the IDG-WSI and IDG-NSI models in all regimes of the number of defectives and inhibitors.

\section{Conclusion}
A new generalization of the $1$-inhibitor model, termed IDG model was introduced. In the proposed model, an inhibitor can inhibit a non-empty subset of the defective set of items. Probabilistic non-adaptive pooling design and a two-stage adaptive pooling design were proposed and lower bounds on the number of tests were identified. Both in the small and large inhibitor regimes, the upper bound on the number of tests for the proposed non-adaptive pooling design is shown to be close to the lower bound, with a difference of logarithmic multiplicative factors in the number of items. 

For the proposed two-stage adaptive pooling design, the upper bound on the number of tests is close to the lower bound in all regimes of the number of inhibitors and defectives, the difference being logarithmic multiplicative factors in the number of items. 

Future works could include more practical versions of the IDG model, such as taking the following considerations into account.
\begin{enumerate}
\item Cancellation effect of the normal items on the inhibitors.
\item Partial inhibition of expression of defectives by the inhibitors, which also naturally embraces the presence of inhibitors in the semi-quantitative group testing model \cite{EmM_TIT2014}.
\item Inclusion of the $k$-inhibitor model, for unknown $k$, as a part of the association pattern in the IDG model.
\end{enumerate}
Obtaining lower and upper bounds on the number of tests for the aforementioned variants of the IDG model along with inclusion of noisy tests should be more involved and worth pursuing. 

\bibliographystyle{ieeetr}
\bibliography{References_GT}

\end{document}